\newtheorem{theorem}{Theorem}
\newtheorem{lemma}{Lemma}
\newtheorem{definition}{Definition}
\newtheorem{remark}{Remark}
\newtheorem{proposition}{Proposition}
\begin{document}

\begin{center}
\large \bf Recovering the shape of a point cloud in the plane \normalsize
\end{center}
\normalsize

\

\begin{center}
  Beatriz Pateiro-L\'opez and Alberto Rodr\'{\i}guez Casal\footnote{Email: alberto.rodriguez.casal@usc.es}\\ 
  Universidad de Santiago de Compostela, Spain\\
\end{center}

%\title{Recovering the shape of a point cloud in the plane}
%
%\author{Beatriz Pateiro-L\'opez and Alberto Rodr\'{\i}guez Casal}
%
%%\institute{Beatriz Pateiro-L\'opez \at Universidade de Santiago de Compostela \and Alberto Rodr\'{\i}guez Casal \at Universidade de Santiago de Compostela, Campus Vida, R\'ua Lope G\'omez de Marzoa, s/n, 15782,Santiago de Compostela, Spain\\\email{alberto.rodriguez.casal@usc.es}}
%
%\maketitle 

%\footnote{Corresponding author: Facultade de Matem\'aticas. Universidade de Santiago de Compostela. Campus Vida. R\'ua Lope G\'omez de Marzoa, s/n. 15782.Santiago de Compostela.\\ E-mail: alberto.rodriguez.casal@usc.es}  \\
 % Universidad de Santiago de Compostela. Spain\\
%\end{center}

%\noindent Beatriz Pateiro-L\'{o}pez\footnote{Email address: beatriz.pateiro@usc.es}, Universidade de Santiago de Compostela.\\
%\noindent Alberto Rodr\'{\i}guez-Casal \footnote{ Email address: alberto.rodriguez.casal@usc.es}$^{,}$\footnote{Corresponding author}, Universidade de Santiago de Compostela.\\

%\vspace{1em} \noindent Postal address:\\
%Facultade de Matem\'aticas. Universidade de Santiago de Compostela\\
%R\'ua Lope G\'omez de Marzoa, s/n. Campus sur. Santiago de Compostela. E-15782. \\
%A Coru\~{n}a, Spain.\\

%\vspace{1cm} \noindent{\small {\sc Abstract.}}
%\thispagestyle{empty}
\section*{Abstract}
In this work  we deal  with the problem of support estimation under shape restrictions. The shape restriction we deal with is an extension of the notion of convexity named $\alpha$-convexity.
 Instead of assuming, as in the convex case, the existence of a separating hyperplane for each exterior point we assume the existence of a separating open ball with radius $\alpha$.
Given an $\alpha$-convex set $S$, the $\alpha$-convex hull of independent random points in $S$ is {the natural estimator of the set. If $\alpha$ is unknown the {\mbox{$r_n$-convex}}
hull of the sample can be considered.
We analyze the asymptotic properties of the {\mbox{$r_n$-convex}} hull estimator in the bidimensional case and obtain the convergence rate for the expected distance in measure between the set and the estimator. The geometrical complexity of the estimator and its dependence on $r_n$ is also obtained via the analysis of the
expected number of vertices of the {\mbox{$r_n$-convex}} hull.

\noindent {\bf{Keywords:}} Convex set; $\alpha$-convex set;  Set estimation; Distance in measure; Image analysis.

\noindent {\bf{AMS 2000 subject classifications:}} Primary 60D05; secondary 62G20.

%\vspace{0.2cm} %\noindent{\small\textsc{Keywords:}}
%\noindent {\bf{Keywords:}} Convex set; $\alpha$-convex set; Set estimation; Distance in measure; Image analysis

%\keywords{Convex set \and $\alpha$-convex set \and  Set estimation \and  Distance in measure \and  Image analysis}
%\vspace{0.2cm} %\noindent{\small\textsc{2000 AMS Subject Classifications:}}
%\noindent {\bf{AMS 2000 subject classifications:}} Primary 62G07, Secondary 62G20

%\subclass{60D05 \and 62G20}

%\vspace{0.5cm} \noindent{\small\textsc{Running head:}}  Recovering the shape of a point cloud in the plane

\section{Introduction}
Let $S$ be a convex set in the plane. Starting from the classical papers by \cite{Renyi63, Renyi64}, asymptotical behavior of the convex hull of random points in $S$ has received great attention. Also, expressions for the expected area, perimeter, and number of vertices of the convex hull of a sample have been object of research. From the point of view of set estimation, the convexity assumption has been extensively considered in the literature. If we assume that the set of interest $S$ (for instance the unknown support of an absolutely continuous distribution) is convex, then the convex hull of a sample from that distribution turns out to be a good choice to recover the shape of the support. \cite{Dumbgen96} carry out the asymptotic analysis of the convex hull estimator for general dimension (in terms of the Hausdoff distance between the estimator and the set).  Computations of measures of the convex hull, such as the number of vertices or the volume become quite complicated.  In fact, most of
the known results concern the asymptotic behavior of the expected value of some interesting geometrical characteristics such as the area, perimeter or the number of vertices. Only recently the asymptotic analysis of the variance or the limit law of these quantities have been performed for general convex sets and dimension $d$ see, for instance, \cite{Reitzner05}. We refer to the surveys by \cite{Schneider88}  for the classical results on convex set estimation} and \cite{Reitzner10} for more recent results on the subject.

Convexity can be a restrictive assumption. Just as an example, it would limit the
support estimation problem to connected supports, which is clearly inadequate, for instance if several groups are presented in $S$ and we are interested in performing a cluster analysis. On
the other hand, using the convex hull as an approximation of a non-convex set leads to considerable errors in the estimation. A milder shape-restriction
 which appears in set estimation is $\alpha$-convexity, see \cite{Walther99}.This shape
 restriction assumes that a ball with radius $\alpha$ can roll freely in the complement of $S$ (see next section for a formal definition of $\alpha$-convexity).
 This work deals with the study of a natural estimator when this restriction is imposed, the $\alpha$-convex hull of the sample, that is, the smallest $\alpha$-convex set which
 contains the sample. If $\alpha$ is unknown, we may replace $\alpha$ by a sequence of parameters $r_n$ which goes to zero as $n$ tends to infinity. Some results about the asymptotic behavior of the
 $r_n$-convex hull of the sample can be found in \cite{Rodriguez07}. Here, we are concerned with the convergence rate for the
 expected distance in measure between the set and the estimator. We prove that the obtained convergence rate is sharp and cannot be improved in general. We also study the
dependence on $r_n$ of the expected number of vertices of the {\mbox{$r_n$-convex}} hull estimator. This quantity provides some information about
 the complexity of the estimator in the sense that the more vertices the estimator has, the more complex the estimator is.

The rest of the paper is organized as follows. The shape restriction and the estimator are defined in detail in Section \ref{sec2.1}. The main results are established in Section \ref{sec4.5}. All proofs are deferred to Section \ref{sec.proofs}.

\section{The estimator, the shape restriction and general tools}\label{sec2.1}

\
%\subsection*{The estimator}
\noindent \it 2.1 The estimator\rm

In what follows we assume that $S$ is a (nonempty) compact set
in the bidimensional Euclidean space $\mathbb{R}^2$, equipped with the ordinary
scalar product $\langle \cdot,\cdot \rangle$
and norm $\|\cdot \|$. We also assume that a random sample
of points $X_1,\ldots,X_n$ from
a distribution $P_X$ with support $S$ is observed. The goal is to reconstruct the
set of interest $S$. Several alternatives
have been considered in the literature. For instance, under no shape restriction on $S$,
\cite{Chevalier76} and \cite{Dev80} proposed as estimator of $S$ the union of balls of radius $\varepsilon_n$ with centers in
the sample points. See \cite{Biau} for some new results about this estimator.
However, if it assumed that $S$ fulfils some
smoothness restriction then a more
efficient estimator can be provided.
%The case where $S$ is assumed to be convex has been extensively analyzed in the literature.
Thus, under the assumption
that $S$ is convex, the convex hull of the sample is
the natural estimator.
%The asymptotic behaviorof this estimator is well-known, and very sharp results can be found about the expected number of vertices, the number of facets or the volume of the convex hull of the sample.
%The convexity assumption may be too restrictive in practice and, if $S$ is not convex, the convex hull of the sample could be a bad choice.
%
%{\color{red} Convexity may be too restrictive in practice. The convex hull tends to fill in all the holes in the sample, even in the case that the set $S$ has holes in it.}
 As it was mentioned in the Introduction, this paper focuses on the problem of estimating a set
under a more flexible assumption than convexity, named {\mbox{$\alpha$-convexity}}. A set $A$ is said to be {\mbox{$\alpha$-convex}} if any point that does not
belong to the set is contained in an open ball (not necessarily centered in
the point) which does not intersect the set. This recalls us the definition of convexity and the existence of a separating plane for each exterior point. In fact, a convex set is also {\mbox{$\alpha$-convex}} for any value of $\alpha$. From its
 definition it can be easily seen that a set $A$ is {\mbox{$\alpha$-convex}} if $A=C_{\alpha}(A)$ where
\begin{equation*}
C_\alpha(A)=\bigcap_{\{\mathring{B}(x,\alpha):\ \mathring{B}(x,\alpha)\cap A=\emptyset\}}{\left(\mathring{B}(x,\alpha)\right)^c}
\end{equation*}
is the {\mbox{$\alpha$-convex}} hull of the set $A$, that is, the smallest convex set which contains the set. Here
$\mathring{B}(x,r)$ denotes the open ball with center $x$ and radius $r$ and $A^{c}$ the complement of $A$. In what follows $B$ and $\mathring{B}$ stand
for $B(0,1)$ and $\mathring{B}(0,1)$, respectively. Moreover, from now on, $\overline{A}$ and $\partial A$ will denote the closure and boundary of $A$, respectively.

%In what follows the Euclidean norm and the corresponding inner product will be denoted by $\| \cdot \|$ and $\langle \cdot,\cdot\rangle$,
%respectively. The closed and open ball with center $x$ and radius $r$ are denoted by $B(x,r)$ and (establecer resto de notaci?n).

The {\mbox{$\alpha$-convex}} hull of a set $A$ can be also written as the closing
of the set, that is,
$$
C_{\alpha}(A)=(A\oplus r\mathring{B})\ominus r\mathring{B},
$$
where $\oplus$ and $\ominus$ denote the Minkowski addition and subtraction, respectively.
For two sets  $A, C$ the Minkowski addition is defined by $A\oplus C= \{a+c:a\in A, c\in C\}$
whereas the Minkowski subtraction is $A\ominus C= \{x:\{x\}\oplus C\subset A\}.$ For $\lambda\in\mathbb{R}$, $\lambda C=\{\lambda c: c\in C\}$.
See \cite{Serra84} for more details on these morphological operators.

Now, let us assume that $S$ is {\mbox{$\alpha$-convex}} for some $\alpha>0$. Given
a random
sample $\mathcal{X}_n=\{X_1,\ldots,X_n\}$ from $P_X$ with support $S$,  the {\mbox{$\alpha$}-convex} hull of the sample
\[
C_\alpha(\mathcal{X}_n)=(\mathcal{X}_n\oplus \alpha\mathring{B})\ominus \alpha\mathring{B}
\]
turns out to be a natural estimator for the set $S$. This estimator has the drawback of depending on the (possibly) unknown parameter $\alpha$.
This difficulty can be overcome by taking a sequence of positive numbers $\{r_n\}$ converging to zero as $n$ tends to infinity.
This ensures that $r_n\leq\alpha$ for $n$ large enough and therefore $S$  is also $r_n$-convex. For the sake of simplicity we assume that $r_n\leq\alpha$ for all $n$ and define the estimator
\begin{equation}\label{Sn.open}
S_n=C_{r_n}(\mathcal{X}_n)=(\mathcal{X}_n\oplus r_n\mathring{B})\ominus r_n\mathring{B}.
\end{equation}
Our goal is to analyze the asymptotic properties of this set estimator. Here we will consider
the distance in measure to quantify the similarity in content of $S$ and $S_n$. As measure we will use the Lebesgue measure $\mu$. Hence, the distance
between $S$ and $S_n$ is defined as
\[d_\mu(S,S_n)=\mu(S\Delta S_n)=\mu((S\setminus S_n)\cup(S_n\setminus S))=\mu(S\setminus S_n),\]
since with probability one $\mathcal{X}_n\subset S$, which implies $S_n\subset S$.

\

%\subsection*{The shape restriction}
\noindent \it 2.2 The shape restriction\rm

The estimator (\ref{Sn.open}) was proposed in \cite{Rodriguez07}. In
that paper the convergence rate for the Hausdorff distance is
provided, under the assumption that $S$ is a smooth $\alpha$-convex
set. Apart from the $\alpha$-convexity of $S$, it is also assumed that $\overline{S^c}$
is $\alpha$-convex. Both conditions imply that $S$ belongs to Serra's
regular model. See \cite{Walther99} for an exact geometric characterization of
Serra's regular model in terms of
{\mbox{$\alpha$-convexity}} and free rolling conditions. Essentially,
a nonempty compact set $S$ belongs to Serra's regular model if,
for some $\alpha>0$,
\begin{enumerate}
\item[(R)\label{a1}]{A ball of radius $\alpha>0$ rolls freely in $S$ and in $\overline{S^{c}}$.}\label{A1}
\end{enumerate}
We say that a ball $\alpha B$ rolls freely in a closed set $A$ if for each
boundary point $a\in\partial A$ there exists some $x\in A$ such that
$a\in B(x,\alpha)\subset A$.  Note that the free rolling
condition presented here is not exactly the same as the one given in
\cite{Walther99}. In that paper it is also required that $A\ominus
\alpha B$ is path-connected in order to preserve the physical
meaning of rolling freely. We have suppressed this additional requirement in our definition of
free rolling since it will not be
necessary for our purposes. Condition (R) is enough in order to
guarantee that both {\mbox{$S$ and $\overline{S^{c}}$ are $\alpha$-convex}}. It also guarantees the existence at each point
$s\in\partial S$ of a unique outward pointing unit normal vector
$\eta(s)$ such that

\begin{equation*}
B(s-\alpha\eta(s),\alpha)\subset S \mbox{ and } B(s+\alpha\eta(s),\alpha)\subset \overline{S^c},
\end{equation*}
The proof of these geometrical facts, see Appendix A in \cite{Pateiro08},  can be thought as an alternative
proof for Remark 3 in
\cite{Walther99} referring to the validity of its Theorem 1
when the set $S$ is not assumed to be path-connected. Another implication
of Assumption (R) has to do with the concept of positive
reach of a set, not mentioned in \cite{Walther99}. \cite{Federer59}
defines the reach of a nonempty closed set $A$ in the $d$ dimensional Euclidean space,
${\textnormal{reach}}(A)$,  as the
largest $\alpha$, possibly infinity, such
that if $x\in\mathbb{R}^d$ and $d(x,A)=\inf\{\|x-y\|:\ y\in A\}<\alpha$, then
the metric projection of $x$ onto $S$ is unique. %, that is, there exists a unique point $P_S(x)\in S$ such that
%\[\left\|x-P_S(x)\right\|=d(x,\partial S)=\inf\{\left\|x-s\right\|:\ s\in S\}.\]
\cite{Federer59} provides a generalization of the Steiner's
formula for sets with positive reach. Recall that, roughly speaking,
the Steiner's formula establishes that the Lebesgue measure
of the closed {\mbox{$r$-neighbourhood}}, $B(A,r)=\{x:\ d(x,A)\leq r\}$, of
a convex set $A$ can
be expressed as a polynomial of degree at most $d$ in $r$.
Federer's result says that the same holds for sets of positive reach
and $r<{\textnormal{reach}}(A)$. It can be proved
that, under Assumption (R), the reach of both $S$ and $\overline{{S}^{c}}$ is
greater than or equal to $\alpha$. %, see Appendix A in \cite{Pateiro08}.

\

%\subsection*{Tools: Unavoidable families of sets}
\noindent \it 2.3 Tools: Unavoidable families of sets\rm

The procedure of bounding the expected value of $d_\mu(S,S_n)$ becomes easier
if we replace the proposed estimator by
\begin{equation}\label{Sn.close}
S_n=(\mathcal{X}_n\oplus r_n{B})\ominus r_n{B}.
\end{equation}
It is important to note that, although we use the same
notation $S_n$ for both $(\mathcal{X}_n\oplus r_n{B})\ominus r_n{B}$ and $(\mathcal{X}_n\oplus r_n\mathring{B})\ominus r_n\mathring{B}$, both
estimators are not necessarily equal, see Figure \ref{fig:noiqual}.
\begin{figure}[htb]
\begin{center}
\setlength{\unitlength}{1mm}
\begin{picture}(130,45)
\psdot(1.1330,1)
\psdot(2,2.5)
\psdot(2.83,1)

\put(10,5){\small{$X_1$}}
\put(26,5){\small{$X_2$}}
\put(20,26){\small{$X_3$}}

\psdot(5.1330,1)
\psdot(6,2.5)
\psdot(6.83,1)
\psdot(5.1330,1)
\psdot(6,2.5)
\psdot(6.83,1)
\pscircle[linestyle=dashed](5.1330,1){1}
\pscircle[linestyle=dashed](6,2.5){1}
\pscircle[linestyle=dashed](6.83,1){1}
\psline(5.1330,1)(5.1330,0)
\put(49,5){\small{$r$}}

\psdot(9.1330,1)
\psdot(10,2.5)
\psdot(10.83,1)
\psdot(9.1330,1)
\psdot(10,2.5)
\psdot(10.83,1)
\pscircle(9.1330,1){1}
\pscircle(10,2.5){1}
\pscircle(10.83,1){1}

\psdot(10,1.5)
\put(99,17){\small{$c$}}
\end{picture}
\caption{\textit{For the point set $\mathcal{X}=\{X_1,X_2,X_3\}$, $(\mathcal{X}\oplus r\mathring{B})\ominus r\mathring{B}=\mathcal{X}$ and $(\mathcal{X}\oplus r{B})\ominus r{B}=\mathcal{X}\cup\{c\}$.}}
\label{fig:noiqual}
\end{center}
\end{figure}
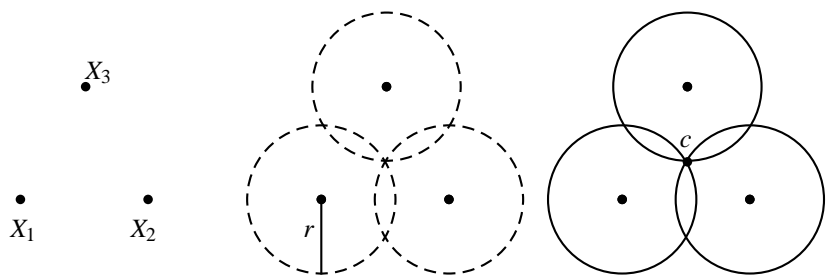
However, it is not difficult to prove that this event has probability zero, see Appendix B in \cite{Pateiro08}. Hence we can
compute $\mathbb{E}(d_\mu(S,S_n))$ by using either (\ref{Sn.open}) or (\ref{Sn.close}). Then, we can write

\begin{eqnarray}\label{esp}
\mathbb{E}(d_\mu(S,S_n))&=&\mathbb{E}(\mu(S\setminus S_n))=\int_S{P(x\notin S_n)\mu(dx)}\nonumber\\
&=&\int_S{P(\exists y\in {B}(x,r_n): {B}(y,r_n)\cap\mathcal{X}_n=\emptyset)\mu(dx)}.
\end{eqnarray}

So, the goal is to find a bound
for $P(\exists y\in {B}(x,r_n): {B}(y,r_n)\cap\mathcal{X}_n=\emptyset)$. This
bound will also allow us to obtain
a bound for the expected number of extreme points of $S_n$. As in the convex case, it is said that a sample point $X_i$ is an extreme point if  $X_i\in \partial S_n$. The number
of extreme points provide us with information about the complexity of the estimator. In
the convex case, removing the extreme points have been used in data depth for ordering multivariate data sets,
see \cite{Barnett76}. A similar idea can be used in the non-convex case. If $\mathcal{N}_n$ denotes the number of extreme points, then
$$
\mathbb{E}(\mathcal{N}_n)=nP(X_n\mbox{ is an extreme point}).
$$
It can be easily seen that $X_n$ is an extreme point of $S_n$
if and only if $X_n$ belongs to the boundary of an open ball with radius $r_n$ which does not intersect $\mathcal{X}_n$. So, conditioning on $X_n$, we
get
\begin{eqnarray}\label{extremes}
\mathbb{E}(\mathcal{N}_n)&=&  n\int_S P(\exists y\in \partial B(x,r_n): B(y,r_n)\cap\mathcal{X}_{n-1}=\emptyset )P_X(dx) \nonumber \\
&\leq& n \int_S P\left(\exists y\in {B}(x,r_n): {B}(y,r_n)\cap\mathcal{X}_{n-1}=\emptyset\right)P_X(dx),
\end{eqnarray}
where $\mathcal{X}_{n-1}=\{X_1,\ldots,X_{n-1}\}$. Hence,
if we were able to obtain an upper bound for
$P(\exists y\in {B}(x,r_n): {B}(y,r_n)\cap\mathcal{X}_n=\emptyset)$ we would get
a bound both for $\mathbb{E}(d_\mu(S,S_n))$ and $\mathbb{E}(\mathcal{N}_n)$. The idea
for bounding this probability is to make use of the concept of unavoidable family of sets, defined below.
\begin{definition}\label{inev2}
Let $x\in\mathbb{R}^2$, $r>0$ and $\mathcal{E}_{x,r}=\{{B}(y,r): y\in {B}(x,r)\}$\glossary{$\mathcal{E}_{x,r}$&$\{{B}(y,r): y\in {B}(x,r)\}$}. The family
of sets $\mathcal{U}_{x,r}$\glossary{$\mathcal{U}_{x,r}$&Unavoidable family of sets for $\mathcal{E}_{x,r}$} is said to be unavoidable
for $\mathcal{E}_{x,r}$ if, for all ${B}(y,r)\in\mathcal{E}_{x,r}$, there exists $U\in\mathcal{U}_{x,r}$ such that $U\subset{B}(y,r)$.
\end{definition}
As a consequence of Definition \ref{inev2}, if $\mathcal{U}_{x,r_n}$ is a finite unavoidable family of sets for $\mathcal{E}_{x,r_n}$, then
\begin{eqnarray}\label{inev}
P(\exists y\in {B}(x,r_n): {B}(y,r_n)\cap\mathcal{X}_n=\emptyset )&\leq & P(\exists U\in \mathcal{U}_{x,r_n}: U\cap\mathcal{X}_n=\emptyset)\\ \nonumber
&\leq& \sum_{U\in\mathcal{U}_{x,r_n}}{(1-P_X(U))^n}.
\end{eqnarray}

%Teniendo en cuenta (\ref{esp}) y la desigualdad dada en (\ref{inev}), podemos acotar $\mathbb{E}(d_\mu(S,S_n))$ a partir de la definici?n de familias inevitables para $\mathcal{E}_{x,r_n}$, con $x\in S$. Se trata de definir familias inevitables de manera que tanto el n?mero de conjuntos de la familia como la probabilidad de dichos conjuntos no dependa del punto $x$. Adem?s dicha probabilidad debe ser de un ordenes adecuado para que la cota sea aceptable. Para ello consideraremos el soporte $S$ como la uni?n de dos subconjuntos. Se tiene
%\[S=\left\{x\in S:\ \ d(x,\partial S)>r_n\right\}\cup\left\{x\in S:\ \ d(x,\partial S)\leq r_n\right\}.\]
%
%Entonces, volviendo a (\ref{esp}),
%\begin{eqnarray}
%\mathbb{E}(d_\mu(S,S_n))&=&\int_{\left\{x\in S:\ \ d(x,\partial S)>r_n\right\}}{P(\exists y\in \mathring{B}(x,r_n): \mathring{B}(y,r_n)\cap\mathcal{X}_n=\emptyset)\mu(dx)}\label{mayor}\\
%&+&\int_{\left\{x\in S:\ \ d(x,\partial S)\leq r_n\right\}}{P(\exists y\in \mathring{B}(x,r_n): \mathring{B}(y,r_n)\cap\mathcal{X}_n=\emptyset)\mu(dx)}\label{menor}
%\end{eqnarray}
%
%Definiremos dos familias distintas de inevitables para $\mathcal{E}_{x,r_n}$, que nos servir?n para acotar las integrales dadas en (\ref{mayor}) y (\ref{menor}).
If we define for each $x\in S$ a family {\mbox{$\mathcal{U}_{x,r_n}$}} unavoidable and finite for $\mathcal{E}_{x,r_n}$ then, from (\ref{esp}) and (\ref{inev}), it follows that
\begin{equation}
\label{esp2}
\mathbb{E}(d_\mu(S,S_n))\leq \int_S{\sum_{U\in\mathcal{U}_{x,r_n}}(1-P_X(U))^n\mu(dx)}\leq \int_S{\sum_{U\in\mathcal{U}_{x,r_n}}\exp\left(-nP_X(U)\right)\mu(dx)},
\end{equation}
where in the last inequality we have used $(1-x)\leq \exp(-x)$, for $0\leq x\leq 1$. From (\ref{esp2}) it is apparent that the problem of finding an upper bound for $\mathbb{E}(d_\mu(S,S_n))$ (the same holds for $\mathbb{E}(\mathcal{N}_n)$) reduces to the problem
of finding a lower bound for $P_X(U)$, for all $U\in\mathcal{U}_{x,r_n}$. In view
of (\ref{esp2}) it would be desirable that, both
the lower bound and the number of elements of the
family $\mathcal{U}_{x,r_n}$, depend in the simplest possible
way on the point $x$. In order to find a lower
bound for $P_X(U)$ it is useful to assume that the probability
distribution $P_X$ is uniformly bounded on $S$, that is,
\begin{equation*}
\exists\delta>0\ \ \mbox{such that}\ \ P_X(C)\geq\delta\mu(C\cap S)
\end{equation*}
for all Borel set $C\subset \mathbb{R}^2$.
Crearly, this includes the uniform distribution on $S$.
\section{Main results}\label{sec4.5}
%%%%%%%%%%%%%%%%%%%%%%%%%%%%%%%%%%%%%%%%%%%%%%%%%%%%%%%%%%%%%%%%%%%%%%%%%%%%%%%%%%%%%%%%%%%%%%%%%%%%%%%%%%%%%%%
The main theorem of the paper provides
the convergence rate of the expected value of $d_\mu(S,S_n)$. The concept
of unavoidable family, introduced in Section \ref{sec2.1}, plays a
major role in the proof.  In Theorem \ref{nomejor} we show that the obtained
convergence rate cannot be improved.

\begin{theorem}\label{ordenR2}
Let $S$ be a nonempty compact subset of $\mathbb{R}^2$ such
that a ball of radius $\alpha>0$ rolls freely
in $S$ and in $\overline{S^{c}}$. Let $X$ be a random variable
with probability distribution $P_X$ and
support $S$. We assume that the probability
distribution $P_X$ satisfies that there exists $\delta>0$ such
that $P_X(C)\geq\delta\mu(C\cap S)$
for all Borel subset $C\subset\mathbb{R}^2$. Let $\mathcal{X}_n=\{X_1,\ldots,X_n\}$ be a random sample
from $X$ and let $\{r_n\}$ be a sequence of positive numbers
which does not depend on the sample such that $r_n\leq\alpha$. If
the sequence $\{r_n\}$ satisfies
\begin{equation}\label{conv.rn}
\lim_{n\rightarrow\infty}\frac{nr_n^2}{\log n}=\infty,
\end{equation}
then
\begin{equation}\label{cotaEE}
\mathbb{E}(d_\mu(S,S_n))=O\left(r_n^{-\frac{1}{3}}n^{-\frac{2}{3}}\right).
\end{equation}
%\[\mathbb{E}(d_\mu(S,S_n))=0\left(\textnormal{e}^{-L_1nr_n^d}+r_n^{-\frac{d-1}{d+1}}n^{-\frac{2}{d+1}}\right)\]
\end{theorem}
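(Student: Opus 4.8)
The plan is to use the reduction already established in the excerpt: by \eqref{esp2}, it suffices to construct, for each $x\in S$, a finite unavoidable family $\mathcal{U}_{x,r_n}$ for $\mathcal{E}_{x,r_n}$ together with a uniform lower bound $P_X(U)\geq c\,m_n$ for every $U\in\mathcal{U}_{x,r_n}$, and then optimize. The key geometric observation is that a ball $B(y,r_n)$ with $y\in B(x,r_n)$ always contains a circular segment (a ``cap'') of $B(x,r_n)$ cut off by a chord at distance roughly $r_n$ from the center in a direction depending on $y$; more precisely, if we partition the unit circle of directions into $k_n$ equal arcs and let $U_1,\dots,U_{k_n}$ be the corresponding ``lune-shaped'' pieces of $B(x,r_n)$ (each being the intersection of $B(x,r_n)$ with a half-plane whose bounding line is at a suitable small distance $h_n$ from $x$), then every $B(y,r_n)\in\mathcal{E}_{x,r_n}$ contains at least one $U_i$. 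This is the standard ``rolling cap'' argument: as $y$ ranges over $B(x,r_n)$, the ball $B(y,r_n)$ is forced to cover the cap of $B(x,r_n)$ pointing in the direction of $y-x$, and by choosing $h_n$ and $k_n$ compatibly one guarantees unavoidability. The number of elements is $k_n$, and each $U_i$ is a planar region of area of order $h_n^{3/2} r_n^{1/2}$ (area of a circular segment of a disc of radius $r_n$ with sagitta of order $h_n$).

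Next I would control the mass $P_X(U_i)$ from below. Here the rolling condition (R) enters: since a ball of radius $\alpha$ rolls freely in $S$ and in $\overline{S^c}$, the set $S$ locally looks, near any of its points, like a half-plane up to a curvature correction of order $1/\alpha$; consequently for $x\in S$ and $r_n\le\alpha$ small, at least a fixed fraction (bounded away from $0$ uniformly in $x$, say $\geq 1/4$) of each lune $U_i$ lies inside $S$ — one has to check the two cases where $x$ is deep inside $S$ (trivial) and where $x$ is within $O(r_n)$ of $\partial S$ (use the inner rolling ball $B(s-\alpha\eta(s),\alpha)\subset S$ to see that the half of $U_i$ on the correct side of the tangent line is essentially contained in $S$). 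Combining this with the hypothesis $P_X(C)\ge\delta\,\mu(C\cap S)$ gives $P_X(U_i)\ge c\,\delta\,h_n^{3/2}r_n^{1/2}$ for a universal constant $c$, uniformly over $x\in S$ and over $i$.

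Plugging into \eqref{esp2} yields
\[
\mathbb{E}(d_\mu(S,S_n))\;\le\;\mu(S)\,k_n\exp\!\left(-c\,\delta\, n\, h_n^{3/2} r_n^{1/2}\right).
\]
The unavoidability constraint forces a relation of the form $k_n h_n \asymp r_n$ (the $k_n$ directional arcs must be fine enough that consecutive caps overlap given the chosen sagitta $h_n$), i.e. $h_n \asymp r_n/k_n$. Substituting, the exponent becomes $-c\,\delta\, n\, r_n^2 k_n^{-3/2}$, so
\[
\mathbb{E}(d_\mu(S,S_n))\;\le\;\mu(S)\,k_n\exp\!\left(-c\,\delta\, n\, r_n^2\, k_n^{-3/2}\right).
\]
Now choose $k_n$ so that the exponent is a large constant multiple of $\log k_n$ — concretely $k_n \asymp \left(n r_n^2/\log n\right)^{2/3}$, which tends to infinity precisely by hypothesis \eqref{conv.rn}. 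With this choice $\exp(-c\delta n r_n^2 k_n^{-3/2})$ decays like a fixed negative power of $n$, fast enough that $k_n$ times it is $O(k_n / (\text{poly}\, n))$; balancing the bound gives the claimed rate $O(r_n^{-1/3} n^{-2/3})$ after one checks that $r_n^{-1/3}n^{-2/3} \asymp \mu(S)\,k_n\exp(-c\delta n r_n^2 k_n^{-3/2})$ at the optimal $k_n$. (One must also verify $h_n = r_n/k_n \le r_n \le \alpha$ so the geometric estimates are valid, which holds since $k_n\ge1$ eventually.)

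The main obstacle is the geometric construction of the unavoidable family and the verification that the caps $U_i$ are genuinely unavoidable for the \emph{whole} family $\mathcal{E}_{x,r_n}$ with the quantitatively right sagitta $h_n\asymp r_n/k_n$: one needs an honest lemma stating that for every $y\in B(x,r_n)$ the ball $B(y,r_n)$ contains the cap of $B(x,r_n)$ in direction $(y-x)/\|y-x\|$ with sagitta $h_n$, together with the overlap argument showing $k_n$ equally-spaced such caps already cover all directions. Everything after that — the lower bound on $P_X(U_i)$ via the rolling ball, and the optimization in $k_n$ — is routine, though the optimization must be done carefully to extract the exact exponents $-1/3$ and $-2/3$.
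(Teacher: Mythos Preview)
Your proposal has a genuine gap at the step where you claim that ``at least a fixed fraction (bounded away from $0$ uniformly in $x$) of each lune $U_i$ lies inside $S$.'' This is false for the cap pointing in the outward normal direction when $x$ is close to $\partial S$. Concretely, if $\rho=d(x,\partial S)$ and $U_i$ is the cap of $B(x,r_n)$ in the direction of the outward normal $\eta$ with sagitta $h_n$, then $U_i\subset\{z:\langle z-x,\eta\rangle\ge r_n-h_n\}$. Since $\partial S$ lies (up to curvature $1/\alpha$) at $\langle z-x,\eta\rangle=\rho$, the outer rolling ball $B(P_\Gamma x+\alpha\eta,\alpha)\subset\overline{S^c}$ forces $\mu(U_i\cap S)=0$ whenever $\rho<r_n-h_n$, which is the generic situation for $x$ near $\partial S$ and $h_n\ll r_n$. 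That outward cap is nevertheless indispensable in your family, because for $y=x+r_n\eta\in B(x,r_n)$ the only caps of $B(x,r_n)$ contained in $B(y,r_n)$ point roughly in the $\eta$-direction. Hence the corresponding term $(1-P_X(U_i))^n$ in \eqref{inev} equals $1$ and the whole bound collapses for boundary points.

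There is an internal consistency check that also flags the error: if your uniform bound $P_X(U_i)\ge c\,\delta\,h_n^{3/2}r_n^{1/2}$ really held, then since $k_n\exp(-c\,\delta\, n\, r_n^2\, k_n^{-3/2})$ is monotone \emph{increasing} in $k_n$ (both factors increase), optimization would drive $k_n$ down to a fixed constant, yielding $\mathbb{E}(d_\mu(S,S_n))=O(e^{-c'\,n r_n^2})$, exponentially smaller than $r_n^{-1/3}n^{-2/3}$. This would contradict the lower bound of Theorem~\ref{nomejor}. So your ``optimization in $k_n$'' is not extracting the right rate from a correct inequality; it is an artifact of a bound that does not hold.

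What the paper does, and what your argument is missing, is to let the lower bound on $P_X(U)$ depend on $\rho=d(x,\partial S)$. For interior points ($\rho>r_n/2$) a fixed $6$-element family of circular sectors gives $P_X(U)\ge L_1 r_n^2$, contributing $O(e^{-L_1 n r_n^2})$. For boundary points ($\rho\le r_n/2$) a different $6$-element family, still of fixed size but aligned with the normal at the nearest boundary point and built from sectors together with two pieces of a Reuleaux triangle, gives only $P_X(U)\ge L_2\,r_n^{1/2}\rho^{3/2}$; the ``bad'' outward direction is handled by a set that hugs $\partial S$ from the inside and has area of exactly this order. Integrating $\exp(-L_2\, n\, r_n^{1/2}\rho^{3/2})$ over the strip $\{0\le\rho\le r_n/2\}$ via the change of variable $v=L_2\, n\, r_n^{1/2}\rho^{3/2}$ is what produces the factor $r_n^{-1/3}n^{-2/3}$. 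The rate comes from this integration in $\rho$, not from tuning the cardinality of the unavoidable family.
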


\begin{remark}\label{casicasi}
\cite{Rodriguez07} proves that, if $S$ is under
the conditions of Theorem \ref{ordenR2} and $\{r_n\}$ is a
sequence of positive numbers satisfying (\ref{conv.rn}), then, for the bidimensional case, $d_\mu(S,S_n)=O(r_n^{-1}(\log n/n)^{2/3})$, almost surely. The
convergence rate of $\mathbb{E}(d_\mu(S,S_n))$ obtained here is, therefore, faster than
the obtained almost sure convergence rate of $d_\mu(S,S_n)$. Note that the logarithmic
term vanishes in (\ref{cotaEE}). Moreover, the penalty factor $r_n^{-1/3}$ is asymptotically smaller than $r_n^{-1}$.

\end{remark}

\begin{remark}

\label{Remark2}
The proof of Theorem \ref{ordenR2} relies on
Propositions \ref{lejos.R2} and \ref{cerca.R2} (see Section \ref{sec.proofs}) which provide
suitable unavoidable families of sets both for points
far from the boundary of $S$ and close to it. Most of the results
can be easily extended to the general $d$-dimensional case. However, some proofs are
much more involved and of less geometrical nature. The main difficulty
in analyzing the general case is in proving
Proposition \ref{cerca.R2}, see \cite{Pateiro08}.
%for a detailed description of the technical difficulties in proving the general case.
\end{remark}

Next theorem shows
that the rate in Theorem \ref{ordenR2} cannot be improved.
\begin{theorem}\label{nomejor}
Under the conditions of Theorem \ref{ordenR2}, there exist sets $S$ for which
\[\liminf_{n\rightarrow\infty} r_n^{\frac{1}{3}}n^{\frac{2}{3}}\mathbb{E}(d_\mu(S,S_n))>0.\]
\end{theorem}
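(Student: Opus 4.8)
The plan is to exhibit a single concrete set $S$ and the uniform distribution on it for which the upper bound of Theorem \ref{ordenR2} is attained up to a constant, by producing a matching \emph{lower} bound for $\mathbb{E}(d_\mu(S,S_n))$. The natural candidate is $S=\alpha B$, the ball of radius $\alpha$ (which trivially satisfies condition (R)), with $P_X$ the uniform distribution; any fixed sequence $r_n\le\alpha$ satisfying \eqref{conv.rn} is used. Starting from the exact identity \eqref{esp}, namely
\[
\mathbb{E}(d_\mu(S,S_n))=\int_S P\bigl(\exists y\in B(x,r_n):B(y,r_n)\cap\mathcal{X}_n=\emptyset\bigr)\,\mu(dx),
\]
it suffices to lower bound this probability for a suitable subregion of $S$. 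The idea is that a point $x$ lying in a thin annular shell near $\partial S$ of width of order $r_n^{1/3}n^{-2/3}$ fails to be covered whenever the empirical measure puts no mass in a specific ball $B(y,r_n)$ tangent to $\partial S$ from inside and passing near $x$. So first I would identify, for $x$ in this shell, a deterministic ball $B(y_x,r_n)\subset B(x,r_n)$ with $B(y_x,r_n)\subset S$ and $P_X(B(y_x,r_n))$ as small as possible; because the ball is squeezed against the boundary of the disk $S$, its intersection with $S$ is a lens whose area is of order $r_n^{2}$ times a factor measuring how much of the ball sticks out — and by choosing $x$ at distance $\sim \varepsilon$ from the boundary and $y_x$ pushed outward, one makes $P_X(B(y_x,r_n)\cap S)$ of order $\delta r_n^{1/2}\varepsilon^{3/2}$ (the standard ``circular segment'' scaling).

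The second step is the probabilistic one: for such $x$, $P(B(y_x,r_n)\cap\mathcal{X}_n=\emptyset)=(1-P_X(B(y_x,r_n)))^n\ge \exp(-c\,n r_n^{1/2}\varepsilon^{3/2})$ for a constant $c$ (using $1-t\ge e^{-ct}$ on $[0,1/2]$, valid since $nr_n^{2}/\log n\to\infty$ forces the relevant probabilities to be small for the chosen $\varepsilon$). Choosing $\varepsilon=\varepsilon_n$ so that $n r_n^{1/2}\varepsilon_n^{3/2}$ is a fixed constant, i.e. $\varepsilon_n\asymp r_n^{-1/3}n^{-2/3}$, makes this probability bounded below by a positive constant uniformly over the shell. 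Integrating over $x$ in the shell, whose Lebesgue measure is of order $\varepsilon_n$ (the boundary $\partial S$ has finite length), yields
\[
\mathbb{E}(d_\mu(S,S_n))\;\ge\; c'\,\varepsilon_n\;\asymp\; r_n^{-1/3}n^{-2/3},
\]
which is exactly the claimed $\liminf_{n}r_n^{1/3}n^{2/3}\mathbb{E}(d_\mu(S,S_n))>0$.

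The main obstacle, and the step that needs the most care, is the geometric bookkeeping in the first step: one must verify that for every $x$ in the shell there genuinely exists a ball of radius $r_n$ contained in $B(x,r_n)$, avoiding $x$'s ``covered'' status, whose overlap with $S=\alpha B$ has area controlled \emph{from above} by $O(\delta^{-1}r_n^{1/2}\varepsilon_n^{3/2})$ — i.e. getting the right exponents $1/2$ and $3/2$ from the geometry of two circles (one of radius $\alpha$, one of radius $r_n$) that are nearly internally tangent. This is a one-dimensional optimization over the position of $y_x$ along the inward normal at the foot of $x$ on $\partial S$, and the circular-segment area formula gives the scaling, but one must keep the constants uniform in $n$ (using $r_n\le\alpha$ so the small ball never ``wraps around'' the boundary) and make sure the chosen $y_x$ keeps $B(y_x,r_n)$ inside $S$ so that the only randomness is whether the sample hits that fixed ball. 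A secondary point to check is that one may legitimately work with the closed-ball estimator \eqref{Sn.close} and the identity \eqref{esp}, which the excerpt has already justified. Once the geometry is pinned down, the remaining estimates are the routine exponential bound and an integration, so the heart of the argument is purely the two-circles lens computation.
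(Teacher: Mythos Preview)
Your overall strategy---take $S=\alpha B$ with the uniform law, restrict to a thin shell near $\partial S$, and for each $x$ there exhibit one ball $B(y_x,r_n)\subset B(x,r_n)$ with small $P_X$-mass so that $(1-P_X(B(y_x,r_n)))^n$ stays bounded below---is exactly the paper's approach. The lens scaling $r_n^{1/2}\varepsilon^{3/2}$ you identify is the right one, and either integrating over the full shell $\{d(x,\partial S)\le r_n/2\}$ (as the paper does, via the change of variable $v=cnr_n^{1/2}\rho^{3/2}$) or over your thinner shell of width $\varepsilon_n\asymp r_n^{-1/3}n^{-2/3}$ yields the desired lower bound.

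There is, however, a genuine inconsistency in your write-up that would break the argument if resolved the wrong way. You twice insist that the chosen ball satisfy $B(y_x,r_n)\subset S$ (``make sure the chosen $y_x$ keeps $B(y_x,r_n)$ inside $S$''), yet elsewhere you correctly speak of the ball ``sticking out'' and of the lens $B(y_x,r_n)\cap S$. These are incompatible: if $B(y_x,r_n)\subset S$ then, for the uniform law, $P_X(B(y_x,r_n))=r_n^2/\alpha^2$ \emph{independently of $\varepsilon$}, whence $(1-P_X)^n\le\exp(-nr_n^2/\alpha^2)$, which decays faster than any power of $n$ by \eqref{conv.rn} and gives a trivial lower bound. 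The correct placement---and the one the paper uses---puts $y_x$ \emph{outside} $S$: with $\eta=x/\|x\|$ one takes $\tilde x=(\|x\|+r_n)\eta\in\partial B(x,r_n)$, so that $B(\tilde x,r_n)\cap S$ is precisely the lens of two circles (radii $\alpha$ and $r_n$, centres at distance $\alpha+r_n-d(x,\partial S)$) whose area is $O(r_n^{1/2}d(x,\partial S)^{3/2})$. You do \emph{not} need the ball inside $S$ for the probabilistic step: the event $\{B(y_x,r_n)\cap\mathcal X_n=\emptyset\}$ implies $x\notin S_n$ as soon as $y_x\in B(x,r_n)$, regardless of where $B(y_x,r_n)$ sits relative to $S$. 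Also, the constant multiplying the lens area in the upper bound for $P_X$ should be the reciprocal of $\mu(S)$ (uniform density), not the lower-density constant $\delta$.
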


%\begin{remark}
%({\color{red} ?`Dejar esto?} )We conjecture that
%$$\liminf_{n\rightarrow\infty}r_n^{\frac{1}{3}}n^{\frac{2}{3}}\mathbb{E}(d_\mu(S,S_n))>0$$
%for any set $S$ under the conditions of Theorem \ref{ordenR2}. The proof relies
%on the following ``local convexity'' property, which we think $S$ fulfills.
%We say that $S$ is ``locally convex'' in $B(s,\tau)\cap\partial S$ for $s\in\partial S$ and $\tau>0$ if there
%exists $\varepsilon>0$ such that for all $t\in B(s,\tau)\cap\partial S$, the set $B(t,\varepsilon)\cap S$ is contained
%in the halfplane $\{x\in\mathbb{R}^2:\ \left\langle x-t, \eta(t)\right\rangle\leq 0\}$, where $\eta(t)$ is the outward pointing unit normal vector at $t$. Note that this local
%convexity property holds for any $s\in\partial S$, $\tau>0$, and $\varepsilon>0$ when $S$ is a ball of radius $\alpha$ as in Theorem \ref{nomejor}.
%\end{remark}

Finally we provide a bound for the expected number of extreme points. Note that the bound in (\ref{extremes}) for the number of vertices is almost the same as
the quantity which is bounded for the distance in measure, see (\ref{esp}). The main difference is that (\ref{extremes}) involves an integral
with respect to $P_X$ whereas (\ref{esp}) involves an integral with respect
to $\mu$. In order to bound integrals with respect to $P_X$ by integrals with respect to $\mu$ we assume that $P_X$ also satisfies
$$
\exists\beta>0, \mbox{ such that } P_X(C)\leq \beta \mu(C\cap S).
$$
Again the uniform distribution satisfies the above assumption.
\begin{theorem}\label{extremos}
Let us assume that the support $S$ and the sequence $r_n$ are under the conditions of Theorem \ref{ordenR2}. Let assume us
that the probability distribution which generates the sample
satisfies that there exit $\delta,\beta>0$ such that $\delta \mu(S\cap C)\leq P_X(C)\leq \beta \mu(S\cap C)$. Then,
\begin{equation*}
\mathbb{E}(\mathcal{N}_n)=O(r_n^{-\frac{1}{3}}n^{\frac{1}{3}}).
\end{equation*}
\end{theorem}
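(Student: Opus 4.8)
The plan is to leverage the bound (\ref{extremes}) together with exactly the same unavoidable-family machinery that drives Theorem \ref{ordenR2}. Starting from
\[
\mathbb{E}(\mathcal{N}_n)\leq n\int_S P\left(\exists y\in {B}(x,r_n): {B}(y,r_n)\cap\mathcal{X}_{n-1}=\emptyset\right)P_X(dx),
\]
I would first use the upper bound $P_X(C)\leq\beta\mu(S\cap C)$ to replace the integral with respect to $P_X$ by an integral with respect to $\mu$, picking up a factor $\beta$. Then, for each $x\in S$, I would insert the finite unavoidable family $\mathcal{U}_{x,r_n}$ supplied by Propositions \ref{lejos.R2} and \ref{cerca.R2} and apply (\ref{inev}) with $n-1$ in place of $n$, yielding
\[
\mathbb{E}(\mathcal{N}_n)\leq \beta n\int_S\sum_{U\in\mathcal{U}_{x,r_n}}\exp\left(-(n-1)P_X(U)\right)\mu(dx).
\]
The lower bound $P_X(U)\geq\delta\mu(U\cap S)$ then converts this into an integral that is, up to the extra factor $\beta n$ (and the harmless $n-1$ versus $n$ in the exponent, absorbed for $n$ large), precisely the quantity already controlled in the proof of Theorem \ref{ordenR2}, namely something of order $r_n^{-1/3}n^{-2/3}$. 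Multiplying by $n$ gives $r_n^{-1/3}n^{1/3}$.

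Concretely, I expect the proof of Theorem \ref{ordenR2} to establish an estimate of the form $\int_S\sum_{U\in\mathcal{U}_{x,r_n}}\exp(-cnP_X(U))\,\mu(dx)=O(r_n^{-1/3}n^{-2/3})$ for any fixed constant $c>0$ (the constant $c$ only affects the implied constant, not the rate, because the family sizes grow polynomially in $1/r_n$ and the $P_X$-measures of the unavoidable sets are bounded below by explicit powers of $r_n$ times a function of the distance to the boundary). I would isolate that estimate — or rather its proof — as the key input, note that replacing $n$ by $n-1$ changes $c$ by a factor $(n-1)/n\to 1$, and conclude directly. The only genuinely new bookkeeping relative to Theorem \ref{ordenR2} is the extra premultiplying factor $n$ and the passage from $P_X$ to $\mu$ in the outer integral via $\beta$; neither changes the structure of the argument.

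The main obstacle, as in Theorem \ref{ordenR2}, is not in this reduction but lies upstream: it is ensuring that the unavoidable families near the boundary (Proposition \ref{cerca.R2}) are simultaneously finite, small in cardinality, and made up of sets whose $P_X$-measure admits a clean lower bound in terms of $r_n$ and the distance of $x$ to $\partial S$, so that the resulting integral over the boundary layer — which is where the dominant contribution to the rate comes from — can actually be evaluated and shown to be $O(r_n^{-1/3}n^{-2/3})$. Once Theorem \ref{ordenR2} (and the Propositions behind it) is in hand, however, Theorem \ref{extremos} follows by the essentially mechanical modifications described above: multiply the bound by $\beta n$, observe the rate $r_n^{-1/3}n^{-2/3}$ scales to $r_n^{-1/3}n^{1/3}$, and check that condition (\ref{conv.rn}) guarantees this is indeed $o(1)\cdot$ (anything needed) so that all error terms are genuinely lower order.
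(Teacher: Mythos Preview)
Your proposal is correct and follows essentially the same route as the paper: use $P_X\leq\beta\mu$ to pass from the $P_X$-integral in (\ref{extremes}) to a $\mu$-integral, then invoke the exact estimate already obtained in the proof of Theorem~\ref{ordenR2} to bound that integral by $O(r_n^{-1/3}n^{-2/3})$, picking up the extra factor $n$. One small inaccuracy: the unavoidable families here have fixed cardinality ($m_1=m_2=6$), not size growing in $1/r_n$; but this does not affect your argument.
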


%%%%%%%familias inevitables
\section{Proofs}
\label{sec.proofs}

\begin{proof}[Theorem \ref{ordenR2}]
As it was mentioned in Remark \ref{Remark2}, Theorem \ref{ordenR2} relies on
Propositions \ref{lejos.R2} and \ref{cerca.R2}.
Proposition \ref{lejos.R2} gives the desired unavoidable families
for the points which are far away from the boundary of $S$. By points
which are far away from the boundary we mean
those points $x\in S$ such that $d(x,\partial S)> r_n/2$. Taking into
account Definition \ref{inev2}, it will not be difficult
to define a suitable family $\mathcal{U}_{x,r_n}$ in this case. We need
that, given $y\in B(x,r_n)$, there exists $U\in\mathcal{U}_{x,r_n}$
such that $U\subset B(y,r_n)$.
It would be also desirable that $U$ was totally
contained in $S$ and that $\mu(U)$ was of the maximum posible
order $r_n^2$. This would
ensure the best possible rate for $P_X(U)$.  Note
that if $x\in S$ and $d(x,\partial S)> r_n/2$, then
the ball $B(x,r_n/2)$ is fully contained in $S$. So,
the idea is to divide $B(x,r_n/2)$ into a finite number
of subsets. Here, we will consider a partition
of $B(x,r_n/2)$ into circular sectors. The choice of circular sectors
rests upon two main reasons. First, the measure
of a circular sector of $B(x,r_n/2)$ is of order $r_n^2$. Second, if
the central angle of the defined sectors is not too large, then the
resulting family $\mathcal{U}_{x,r_n}$ is unavoidable.

Before the statement of Proposition \ref{lejos.R2}, we
give the precise definition of the circular sectors
and introduce some basic notation that will be useful later. Thus,
let $\mathbb{S}_2=\{u\in\mathbb{R}^2:\|u\|=1\}$ denote
the unit circle in $\mathbb{R}^2$ and $e_2=(0,1)\in\mathbb{R}^2$.  Let $\varphi_{u,v}$ be
the angle between the (nonzero) vectors $u$ and $v$. It is understood
that $\varphi_{u,v}\in\left[0,\pi\right]$ and $\varphi_{u,v}=\varphi_{v,u}$
For $u\in\mathbb{S}_2$ and $\theta\in [0,\pi/2]$, we define the cone
$C_{u}^\theta=\{x\in\mathbb{R}^2:\left\langle x,u\right\rangle\geq\|x\|\cos\theta\}$
and the circular sector $C_{u,r}^\theta=C_{u}^\theta\cap B(0,r).$ Note
that $C_{u,r}^\theta$ is the circular sector with central
angle $2\theta$ enclosed by the radii $v_1=r\mathcal{R}_\theta(u)$ and $v_2=r\mathcal{R}_\theta^{-1}(u)$, where $\mathcal{R}_\theta:\mathbb{R}^2\longrightarrow\mathbb{R}^2$\glossary{$\mathcal{R}_\theta$, $\mathcal{R}$&Counter-clockwise rotation of angle $\theta$, $\mathcal{R}_{\pi/6}$} denotes the counter-clockwise rotation of angle $\theta$, whose associated matrix with respect to the canonical basis is
\[\left(\begin{array}{cc}\cos\theta&-\sin\theta\\
\sin\theta&\cos\theta\end{array}\right).\]
In Figure \ref{fig.Cu.2} we show an example of $C_{u,r}^\theta$.

\begin{figure}[htb]
\begin{center}
\setlength{\unitlength}{1mm}
\begin{picture}(50,50)
\pswedge[fillstyle=solid,fillcolor=lightgray](2.5,2.5){1.5}{15}{75}
\psline[linewidth=.5pt]{<->}(0,2.5)(5,2.5)
\psline[linewidth=.5pt]{<->}(2.5,0)(2.5,5)
\psline[linewidth=1pt]{->}(2.5,2.5)(3.94,2.88)
\psline[linewidth=1pt]{->}(2.5,2.5)(2.88,3.94)
\put(29,41){{{\scriptsize{$v_1$}}}}
\put(41,29){{{\scriptsize{$v_2$}}}}

%\psline[linewidth=1pt]{->}(2.5,2.5)(5.5,3.2)
\psline[linewidth=1pt]{->}(2.5,2.5)(3.2,3.2)
%\psline[linewidth=1pt]{->}(2.5,2.5)(3.6,3.2)
\pscircle[linewidth=.5pt](2.5,2.5){1}
\pscircle[linewidth=1pt](2.5,2.5){1.5}
%\pscircle[linewidth=1pt](3.6,3.2){1.5}
%\psdot[dotsize=1.5pt 1.5](3.6,3.2)
%\put(36,29){{{\scriptsize{$z$}}}}
\put(27,29){{{\scriptsize{$u$}}}}
\put(38,35){{{\scriptsize{$C_{u,r}^\theta$}}}}
\put(20,18){{{\scriptsize{$\mathbb{S}_2$}}}}
\psarc[linewidth=.5pt]{<->}(2.5,2.5){1.1}{15}{45}
\put(35.5,30.5){{{\scriptsize{$\theta$}}}}
\put(35,10){{{\scriptsize{${B}(0,r)$}}}}
%\put(38,48){{{\scriptsize{${B}(z,r)$}}}}

\end{picture}
\caption{\textit{Circular sector $C_{u,r}^\theta$.}}
\label{fig.Cu.2}
\end{center}
\end{figure}
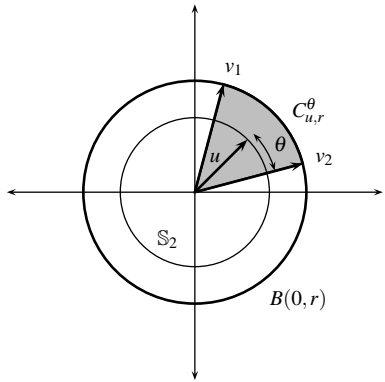

\begin{proposition}\label{lejos.R2}
Under the conditions of Theorem \ref{ordenR2}, {for all} $x\in S$
such that $d(x,\partial S)> r_n/2$, there exists a
finite family $\mathcal{U}_{x,r_n}$ with $m_1=6$ elements,
unavoidable for $\mathcal{E}_{x,r_n}$ and that satisfies
\[P_X(U)\geq L_1r_n^2,\ \ U\in \mathcal{U}_{x,r_n},\]
where the constant $L_1>0$ is independent of $x$.
\end{proposition}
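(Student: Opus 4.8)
The plan is to let $\mathcal{U}_{x,r_n}$ consist of the six congruent closed circular sectors into which $B(x,r_n/2)$ naturally decomposes, and then to verify separately the two clauses of Definition~\ref{inev2}: a uniform lower bound for $P_X(U)$, and unavoidability for $\mathcal{E}_{x,r_n}$. Concretely, I fix the directions $u_j=\mathcal{R}_{(j-1)\pi/3}(e_2)$ for $j=1,\dots,6$ and set $U_j=x+C_{u_j,r_n/2}^{\pi/6}$, so the $U_j$ have pairwise disjoint interiors and union $B(x,r_n/2)$; this gives $m_1=6$. The hypothesis $d(x,\partial S)>r_n/2$ enters at exactly one place: it forces $B(x,r_n/2)\subset S$, hence $U_j\subset S$ and $\mu(U_j\cap S)=\mu(U_j)$.

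For the probability bound, each $U_j$ is a circular sector of radius $r_n/2$ and central angle $\pi/3$, so $\mu(U_j)=\tfrac12(r_n/2)^2\tfrac{\pi}{3}=\tfrac{\pi}{24}r_n^2$, and the assumption $P_X(C)\ge\delta\mu(C\cap S)$ then yields $P_X(U_j)\ge\delta\mu(U_j)=L_1r_n^2$ with $L_1=\pi\delta/24$, which does not depend on $x$.

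The substantive step is unavoidability. Given $B(y,r_n)\in\mathcal{E}_{x,r_n}$, i.e.\ $\|y-x\|\le r_n$, I must produce some $U_j\subset B(y,r_n)$. If $y=x$ this is trivial since $B(x,r_n/2)\subset B(x,r_n)$; otherwise set $u=(y-x)/\|y-x\|$ and pick $j$ with $\varphi_{u,u_j}\le\pi/6$, which is possible because the $u_j$ are equally spaced at angle $\pi/3$. For $p=x+z\in U_j$ (the case $z=0$ being immediate) one has $\varphi_{z,u_j}\le\pi/6$, hence $\varphi_{z,u}\le\varphi_{z,u_j}+\varphi_{u_j,u}\le\pi/3$ and therefore $\langle z,u\rangle\ge\tfrac12\|z\|$. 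Writing $d=\|y-x\|\le r_n$ and $t=\|z\|\le r_n/2$,
\[
\|p-y\|^2=\|z-du\|^2=t^2-2d\langle z,u\rangle+d^2\le t^2-dt+d^2,
\]
and an elementary two-variable estimate shows that $t^2-dt+d^2\le r_n^2$ whenever $0\le t\le r_n/2$ and $0\le d\le r_n$ (the maximum $r_n^2$ being attained only at $t=0$, $d=r_n$). Hence $p\in B(y,r_n)$, so $U_j\subset B(y,r_n)$ and $\mathcal{U}_{x,r_n}$ is unavoidable for $\mathcal{E}_{x,r_n}$.

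I do not anticipate a genuine obstacle; the two points that need care are the angular bookkeeping — the sector half-angle $\pi/6$ plus the rounding error $\pi/6$ incurred when snapping $u$ to the nearest axis $u_j$ must add up to at most $\pi/3$, which is exactly what makes $\cos\varphi_{z,u}\ge\tfrac12$ — and the quadratic estimate above, which is tight and so dictates that six sectors already suffice (and that the bound $\|p-y\|\le r_n$ cannot be improved with this construction). Everything else reduces to the uniform lower-density hypothesis and the elementary geometry of circular sectors.
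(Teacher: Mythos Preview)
Your proposal is correct and follows essentially the same approach as the paper: the same six circular sectors of $B(x,r_n/2)$, the same constant $L_1=\pi\delta/24$, and the same angular/quadratic estimate for unavoidability. The only cosmetic difference is that the paper first proves the slightly stronger intermediate fact $C_{u,r_n}\subset B(y,r_n)$ for all $y\in C_{u,r_n}$ (its Lemma~\ref{cont.R2}, via $\|z\|^2+\|y\|^2-\|z\|\|y\|\le\max(\|z\|^2,\|y\|^2)$) and then restricts to $C_{u,r_n/2}$, whereas you verify the inclusion for the half-radius sector directly.
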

\begin{remark}
This proposition can be easily generalized for dimension $d$. The main
difference is that $m_1$ is in general unknown since it depends on the number
of cones we need to cover the unit ball.
\end{remark}

\begin{proof}
First consider the family
$\mathcal{U}_{0,r_n}=\{C_{u,r_n/2}^{\pi/6},\ u\in\mathcal{W}\}$,
where $\mathcal{W}\subset\mathbb{R}^2$ denotes a set of unit vectors that divides the unit circle into six circular sectors with central angle $\pi/3$. Figure \ref{fig:W0} shows one possible choice of $\mathcal{W}$ and the corresponding family $\mathcal{U}_{0,r_n}$.
\begin{figure}
\begin{center}
\setlength{\unitlength}{1mm}
\begin{picture}(100,60)
\psline[linewidth=.5pt]{<->}(0,2.5)(5,2.5)
\psline[linewidth=.5pt]{<->}(2.5,0)(2.5,5.5)
\psline[linewidth=1pt]{->}(2.5,2.5)(3.5,4.23)
\put(30,45){{{\scriptsize{$u_3=\left(\frac{1}{2},\frac{\sqrt{3}}{2}\right)$}}}}
\psline[linewidth=1pt]{->}(2.5,2.5)(4.5,2.5)
\put(38,27.5){{{\scriptsize{$u_2=(1,0)$}}}}
%\pscircle[linewidth=.5pt](2.5,2.5){1.9}
\psline[linewidth=1pt]{->}(2.5,2.5)(3.5,0.77)
\put(30,4){{{\scriptsize{$u_1=\left(\frac{1}{2},-\frac{\sqrt{3}}{2}\right)$}}}}
\psline[linewidth=1pt]{->}(2.5,2.5)(1.5,4.23)
\put(3,45){{{\scriptsize{$u_4=\left(-\frac{1}{2},\frac{\sqrt{3}}{2}\right)$}}}}
\psline[linewidth=1pt]{->}(2.5,2.5)(0.5,2.5)
\put(2,27.5){{{\scriptsize{$u_5=(-1,0)$}}}}
\psline[linewidth=1pt]{->}(2.5,2.5)(1.5,0.77)
\put(1,4){{{\scriptsize{$u_6=\left(-\frac{1}{2},-\frac{\sqrt{3}}{2}\right)$}}}}

\pscustom[linewidth=.5pt,hatchwidth=.2pt,fillstyle=hlines]{
\psline(8.5,2.5)(9.53,3.1)
\psarc(8.5,2.5){1.2}{30}{90}
\psline(8.5,3.6)(8.5,2.5)
}
\pscustom[linewidth=.5pt,hatchwidth=.2pt,fillstyle=vlines]{
\psline(8.5,2.5)(8.5,3.6)
\psarc(8.5,2.5){1.2}{90}{150}
\psline(7.47,3.1)(8.5,2.5)
}
\pscustom[linewidth=.5pt,hatchwidth=.2pt,fillstyle=hlines]{
\psline(8.5,2.5)(7.47,3.1)
\psarc(8.5,2.5){1.2}{150}{210}
\psline(7.47,1.9)(8.5,2.5)
}
\pscustom[linewidth=.5pt,hatchwidth=.2pt,fillstyle=vlines]{
\psline(8.5,2.5)(7.47,1.9)
\psarc(8.5,2.5){1.2}{210}{270}
\psline(8.5,1.3)(8.5,2.5)
}
\pscustom[linewidth=.5pt,hatchwidth=.2pt,fillstyle=hlines]{
\psline(8.5,2.5)(8.5,1.3)
\psarc(8.5,2.5){1.2}{270}{330}
\psline(9.53,1.9)(8.5,2.5)
}
\pscustom[linewidth=.5pt,hatchwidth=.2pt,fillstyle=vlines]{
\psline(8.5,2.5)(9.53,1.9)
\psarc(8.5,2.5){1.2}{330}{390}
\psline(9.53,3.1)(8.5,2.5)
}

\psline[linewidth=.5pt]{<->}(6,2.5)(11,2.5)
\psline[linewidth=.5pt]{<->}(8.5,0)(8.5,5.5)
%\pscircle[linewidth=.5pt](8.5,3.5){1.5}
\pscircle[linewidth=.5pt](8.5,2.5){1.2}
\psline[linewidth=.5pt]{->}(8.5,2.5)(9.5,4.23)
\psline[linewidth=.5pt]{->}(8.5,2.5)(10.5,2.5)
\psline[linewidth=.5pt]{->}(8.5,2.5)(7.5,4.23)
\psline[linewidth=.5pt]{->}(8.5,2.5)(6.5,2.5)
\psline[linewidth=.5pt]{->}(8.5,2.5)(7.5,0.77)

\put(95,45){{{\scriptsize{$u_3$}}}}
\put(73,45){{{\scriptsize{$u_4$}}}}
\put(73,4){{{\scriptsize{$u_6$}}}}
\put(102,27){{{\scriptsize{$u_2$}}}}
\put(65,27){{{\scriptsize{$u_5$}}}}

%\put(85.5,22.5){{{\scriptsize{$0$}}}}
\psline[linewidth=.5pt]{->}(8.5,2.5)(9.5,0.78)
\put(95,4){{{\scriptsize{$u_1$}}}}
%\put(81,35){{{\scriptsize{$v_2$}}}}
%\put(95,30){{{\scriptsize{$v_1$}}}}
\put(23.5,57){{{\scriptsize{(a)}}}}
\put(97,33){{{\scriptsize{$B(0,r_n/2)$}}}}
\put(83.5,57){{{\scriptsize{(b)}}}}
\end{picture}
\caption{\textit{(a) The set $\mathcal{W}=\{u_i,\ i=1,\ldots,6\}$ divides the unit circle into six circular sectors with central angle $\pi/3$. (b) Family $\mathcal{U}_{0,r_n}=\{C_{u,r_n/2}^{\pi/6}, u\in\mathcal{W}\}$.}}
\label{fig:W0}
\end{center}
\end{figure}
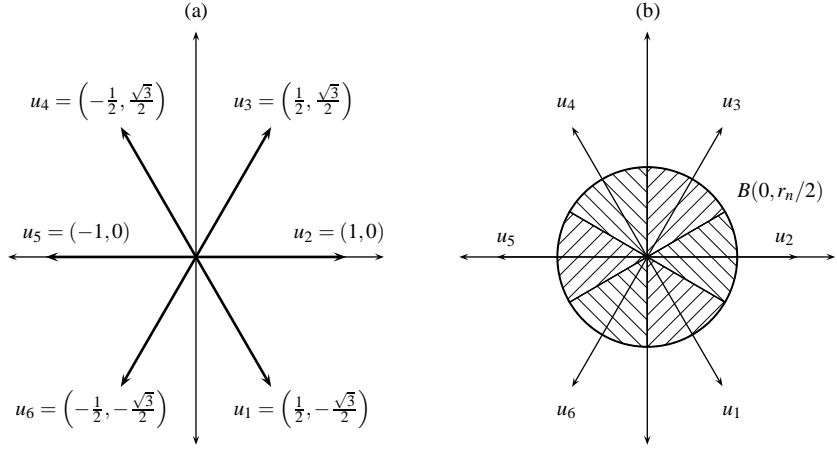
%We prove that $\mathcal{U}_{0,r_n}$ is unavoidable for $\mathcal{E}_{0,r_n}$. To this end we use Lemma \ref{cont.R2}, stated below.
To simplify notation somewhat, we
abbreviate  $C_{u}^{\pi/6}$ and $C_{u,r_n}^{\pi/6}$ to $C_{u}$ and $C_{u,r_n}$, respectively. Note that the definition of $\mathcal{W}$ implies that
\begin{equation*}%\label{cubroB2}
B(0,r_n)=\bigcup_{u\in \mathcal{W}}C_{u,r_n}.
\end{equation*}
The fact that $\mathcal{U}_{0,r_n}$ is unavoidable
for $\mathcal{E}_{0,r_n}$ easily follows from Lemma \ref{cont.R2}, stated
below. To see this, note that for ${B}(y,r_n)\in\mathcal{E}_{0,r_n}$, there
exists $u\in\mathcal{W}$ such that {\mbox{$y\in C_{u,r_n}$}}. Now, by
Lemma \ref{cont.R2}, $C_{u,r_n}\subset B(y,r_n)$ and
therefore $C_{u,r_n/2}\subset B(y,r_n)$. This completes
the proof that $\mathcal{U}_{0,r_n}$ is unavoidable. Thus,
it remains to prove Lemma \ref{cont.R2}. First
we establish, without proof, Lemma \ref{estaenC.R2} which characterizes
the points in $C_u^\theta$ and simplifies the proof of Lemma \ref{cont.R2}.
\begin{lemma}\label{estaenC.R2}
Let $x\neq 0$. Then
$$x\in C_u^\theta\Leftrightarrow\varphi_{x,u}\leq\theta.$$
\end{lemma}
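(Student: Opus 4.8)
The plan is to unwind both sides of the claimed equivalence until they meet at a single scalar inequality between cosines. The key observation is simply that $u$ is a \emph{unit} vector: by the definition of the angle between two nonzero vectors, $\langle x,u\rangle = \|x\|\,\|u\|\cos\varphi_{x,u} = \|x\|\cos\varphi_{x,u}$ for every $x\neq 0$. Substituting this into the defining condition of the cone,
\[
x\in C_u^\theta \iff \langle x,u\rangle \geq \|x\|\cos\theta \iff \|x\|\cos\varphi_{x,u} \geq \|x\|\cos\theta .
\]
Since $x\neq 0$ we may divide by $\|x\|>0$, so $x\in C_u^\theta$ if and only if $\cos\varphi_{x,u}\geq\cos\theta$.

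It then remains to check that $\cos\varphi_{x,u}\geq\cos\theta$ is equivalent to $\varphi_{x,u}\leq\theta$. This is where the range conventions come in: by the convention fixed just above, $\varphi_{x,u}\in[0,\pi]$, while $\theta\in[0,\pi/2]\subset[0,\pi]$, so both angles lie in an interval on which the cosine function is strictly decreasing (hence injective and order-reversing). Therefore $\cos\varphi_{x,u}\geq\cos\theta \iff \varphi_{x,u}\leq\theta$, and chaining the equivalences yields the statement in both directions at once.

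There is essentially no obstacle here; the only points requiring a moment's care are that the hypothesis $x\neq 0$ is used twice — once so that $\varphi_{x,u}$ is well defined, and once to justify dividing by $\|x\|$ — and that one must invoke the $[0,\pi]$ range of $\varphi_{x,u}$ (rather than a larger range) to conclude monotonicity of $\cos$. Both facts are guaranteed by the conventions introduced immediately before the lemma, so the argument is a short computation rather than anything delicate.
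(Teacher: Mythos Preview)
Your argument is correct. The paper actually states this lemma \emph{without proof} (the authors write ``we establish, without proof, Lemma~\ref{estaenC.R2}''), so there is nothing to compare against; your short computation---using $\|u\|=1$ to rewrite the defining inequality as $\cos\varphi_{x,u}\geq\cos\theta$ and then invoking strict monotonicity of $\cos$ on $[0,\pi]$---is exactly the routine verification the authors evidently intended the reader to supply.
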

%\begin{proof}
%Let $x\in C_{u}^\theta$. We have that
%\begin{equation}\label{des1}
%\|x\|\cos\varphi_{x,u}=\left\langle x,u\right\rangle\geq\|x\|\cos\theta.
%\end{equation}
%The inequality in (\ref{des1}) holds if and only if $\varphi_{x,u}\leq\theta$, since the cosine function is monotonically decreasing in $[0,\pi]$.
%\end{proof}

We are now ready to state and prove Lemma \ref{cont.R2}. This lemma reveals that
the partition of $B(0,r_n)$ into circular sectors with central angle $\pi/3$ is indeed a sensible choice, since it guarantees that $\mathcal{U}_{0,r_n}$ is unavoidable.
\begin{lemma}\label{cont.R2}
For all $u\in\mathbb{S}_2$ and $r>0$,
\[C_{u,r}\subset\bigcap_{y\in C_{u,r}}B(y,r).\]
\end{lemma}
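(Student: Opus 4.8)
The claim to prove is Lemma~\ref{cont.R2}: for all $u\in\mathbb{S}_2$ and $r>0$, the circular sector $C_{u,r}=C_{u,r}^{\pi/6}$ satisfies $C_{u,r}\subset\bigcap_{y\in C_{u,r}}B(y,r)$. Equivalently, any two points $y,z\in C_{u,r}$ are within distance $r$ of each other, i.e.\ $\operatorname{diam}(C_{u,r})\le r$. The plan is to reduce this to an elementary planar estimate on the diameter of a circular sector of radius $r$ and half-angle $\pi/6$ (central angle $\pi/3$).

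\textbf{Approach.} By rotational invariance I may take $u=e_2$, so that $C_{u,r}$ is the sector of $B(0,r)$ consisting of points whose angle from $e_2$ is at most $\pi/6$ (this is exactly the characterization in Lemma~\ref{estaenC.R2}: $x\in C_u^\theta\iff\varphi_{x,u}\le\theta$). Fix $y,z\in C_{u,r}$; I must show $\|y-z\|\le r$. Write $y,z$ in polar form with radii $\rho_1,\rho_2\in[0,r]$ and angular separation $\psi=\varphi_{y,z}$. Since each of $y,z$ makes angle at most $\pi/6$ with $e_2$, the triangle inequality for angles gives $\psi\le\varphi_{y,u}+\varphi_{u,z}\le\pi/3$. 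By the law of cosines,
\[
\|y-z\|^2=\rho_1^2+\rho_2^2-2\rho_1\rho_2\cos\psi\le\rho_1^2+\rho_2^2-2\rho_1\rho_2\cos(\tfrac{\pi}{3})=\rho_1^2+\rho_2^2-\rho_1\rho_2,
\]
using $\cos\psi\ge\cos(\pi/3)=\tfrac12$ because $\psi\in[0,\pi/3]$ and $\cos$ is decreasing on $[0,\pi]$. It remains to check that $\rho_1^2+\rho_2^2-\rho_1\rho_2\le r^2$ whenever $0\le\rho_1,\rho_2\le r$. This is a simple two-variable bound: on the square $[0,r]^2$ the function $f(\rho_1,\rho_2)=\rho_1^2+\rho_2^2-\rho_1\rho_2$ has no interior critical point giving a maximum (its only critical point is the origin), so the maximum is attained on the boundary, and a quick check of the edges shows $f\le r^2$ with equality only at $(r,0)$ and $(0,r)$. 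Hence $\|y-z\|\le r$. Since $z\in C_{u,r}$ was arbitrary, $C_{u,r}\subset B(y,r)$; since $y\in C_{u,r}$ was arbitrary, the intersection statement follows. The inclusion $C_{u,r/2}\subset B(y,r)$ used in the proof of Proposition~\ref{lejos.R2} is then immediate from $C_{u,r/2}\subset C_{u,r}$.

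\textbf{Main obstacle.} There is no deep obstacle here; the lemma is genuinely elementary, and the only thing to be careful about is the bookkeeping with angles. Specifically, one must justify that $\varphi_{y,z}\le\varphi_{y,u}+\varphi_{z,u}$ — this is the triangle inequality for the angular (great-circle) metric on $\mathbb{S}_2$, which in the plane is just the statement that angles between rays add up correctly, but it should be invoked cleanly rather than waved at, perhaps by passing to the unit vectors $y/\|y\|,z/\|z\|,u$ and using that geodesic distance on the circle is a metric. The degenerate cases $y=0$ or $z=0$ are trivial (then $\|y-z\|\le r$ outright) and should be dispatched first so that the polar representation is well defined. Everything else is the law of cosines plus the monotonicity of cosine and the one-line boundary analysis of $f$, so I expect the write-up to be short.
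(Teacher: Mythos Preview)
Your proposal is correct and follows essentially the same route as the paper: reduce to showing $\|y-z\|\le r$ for $y,z\in C_{u,r}$, handle the zero case trivially, invoke the triangle inequality for angles together with Lemma~\ref{estaenC.R2} to get $\varphi_{y,z}\le\pi/3$, and then apply the law of cosines. The only cosmetic difference is that the paper bounds $\rho_1^2+\rho_2^2-\rho_1\rho_2$ directly by $\max(\rho_1^2,\rho_2^2)\le r^2$ (since $\rho_1^2-\rho_1\rho_2\le 0$ when $\rho_1\le\rho_2$), which is a one-line alternative to your boundary analysis of $f$.
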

\begin{proof}
Let $z\in C_{u,r}$. We need to show that, for all $y\in C_{u,r}$, $\left\|z-y\right\|\leq r$. Assume, without loss of generality,  that $z$ and $y$ are both non zero vectors since the result is trivial otherwise. We have that
\begin{eqnarray*}
\|z-y\|^2=\|z\|^2+\|y\|^2-2\|z\|\|y\|\cos\varphi_{z,y}.
\end{eqnarray*}
%Let $\varphi_{z,u}$ and $\varphi_{u,y}$ be the angles between the vectors $z$, $u$ and $u$, $y$, respectively.
By the triangle inequality for angles and
Lemma \ref{estaenC.R2} we have $\varphi_{z,y}\leq \varphi_{z,u}+\varphi_{u,y}\leq \frac{\pi}{3}$. Hence,
\[\|z-y\|^2\leq\|z\|^2+\|y\|^2-\|z\|\|y\|\leq\max(\|z\|^2,\|y\|^2)\leq r^2.\]\qed 
\end{proof}

Once we have proved
that $\mathcal{U}_{0,r_n}$ is unavoidable
for $\mathcal{E}_{0,r_n}$ consider, for
each $x\in S$ such that $d(x,\partial S)> r_n/2$, the family
$\mathcal{U}_{x,r_n}=\{x\}\oplus \mathcal{U}_{0,r_n}=\{\{x\}\oplus C_{u,r_n/2},\ u\in\mathcal{W}\}$.
The family $\mathcal{U}_{x,r_n}$, obtained by translating the
family $\mathcal{U}_{0,r_n}$ by the vector $x$, is unavoidable
for $\mathcal{E}_{x,r_n}$, as we state in Lemma \ref{traslada2}. We skip the proof since it is straightforward.
\begin{lemma}\label{traslada2}
Let $\mathcal{U}_{0,r}$ be an unavoidable family for $\mathcal{E}_{0,r}$. Then $\mathcal{U}_{x,r}=\{x\}\oplus\mathcal{U}_{0,r}=\{\{x\}\oplus U,\ U\in \mathcal{U}_{0,r}\}$ is unavoidable for $\mathcal{E}_{x,r}$.
\end{lemma}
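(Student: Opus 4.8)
The plan is to exploit the translation equivariance of both the family $\mathcal{E}_{x,r}$ and the Minkowski sum. First I would unwind the two definitions at the point $x$: a set $B(y,r)$ belongs to $\mathcal{E}_{x,r}$ precisely when $y\in B(x,r)$, which is equivalent to $y-x\in B(0,r)$, i.e. to $B(y-x,r)\in\mathcal{E}_{0,r}$. So every element of $\mathcal{E}_{x,r}$ is the translate by $x$ of an element of $\mathcal{E}_{0,r}$, namely $B(y,r)=\{x\}\oplus B(y-x,r)$.

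Next I would invoke the hypothesis that $\mathcal{U}_{0,r}$ is unavoidable for $\mathcal{E}_{0,r}$: applied to $B(y-x,r)\in\mathcal{E}_{0,r}$, it yields some $U\in\mathcal{U}_{0,r}$ with $U\subset B(y-x,r)$. The remaining step is to translate this inclusion by $x$. Since Minkowski addition is monotone with respect to set inclusion, $U\subset B(y-x,r)$ implies $\{x\}\oplus U\subset\{x\}\oplus B(y-x,r)=B(y,r)$. By definition $\{x\}\oplus U\in\mathcal{U}_{x,r}$, so we have produced an element of $\mathcal{U}_{x,r}$ contained in the arbitrary member $B(y,r)$ of $\mathcal{E}_{x,r}$, which is exactly the unavoidability condition of Definition \ref{inev2}.

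There is essentially no obstacle here; the only things to verify are the elementary facts $\{x\}\oplus B(z,r)=B(x+z,r)$ and the monotonicity of $\oplus$ under inclusion, both of which are immediate from the definition $A\oplus C=\{a+c:a\in A,\ c\in C\}$. (Indeed, the authors already signal this by saying they skip the proof as straightforward.) If one wanted the statement for a general unavoidable family rather than the specific circular-sector family, the same argument goes through verbatim, since nothing about the particular shape of the $U$'s was used.
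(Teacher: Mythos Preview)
Your proposal is correct and is exactly the straightforward translation argument the paper has in mind; indeed the paper omits the proof entirely, stating ``We skip the proof since it is straightforward.'' Your write-up simply makes explicit the two trivial facts (translation of balls and monotonicity of $\oplus$) that justify this omission.
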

%\begin{proof}
%Let $B(y,r)\in\mathcal{E}_{x,r}$. Then $B(y-x,r)\in\mathcal{E}_{0,r}$ and, since $\mathcal{U}_{0,r}$ is unavoidable for $\mathcal{E}_{0,r}$, there exists $U\in\mathcal{U}_{0,r}$ such that $U\subset{B}(y-x,r)$. The proof is now complete as
%\[\{x\}\oplus U \subset\{x\}\oplus {B}(y-x,r)\equiv B(y,r).\]
%\end{proof}

To complete the proof of Proposition \ref{lejos.R2} it remains to give
a lower bound for the probability of the sets of the unavoidable
family we have just defined. For each $u\in\mathcal{W}$ we
have that
%$$P_X\left(\{x\}\oplus C_{u,r_n/2}\right)\geq\delta\mu\left(\{x\}\oplus C_{u,r_n/2}\cap S\right)=\delta\mu\left(\{x\}\oplus C_{u,r_n/2}\right)=\delta\frac{1}{6}\pi\left(\frac{r_n}{2}\right)^2.$$
$$P_X\left(\{x\}\oplus C_{u,r_n/2}\right)\geq\delta\mu\left(\{x\}\oplus C_{u,r_n/2}\cap S\right)=\delta\mu\left(\{x\}\oplus C_{u,r_n/2}\right)=\delta\mu\left(C_{u,r_n/2}\right).$$
This follows simply
because $\{x\}\oplus C_{u,r_n/2}\subset B(x,r_n/2)\subset S$
since $d(x,\partial S)> r_n/2$ and the Lebesgue measure is
invariant under translations, see Figure \ref{fig:cerca}. Therefore,
\[P_X(U)\geq \delta\frac{1}{6}\pi\left(\frac{r_n}{2}\right)^2=L_1r_n^2,\ \ U\in \mathcal{U}_{x,r_n},\]
for $L_1=\delta\pi/24>0$ and the proof of  Proposition \ref{lejos.R2} is complete. \qed

\end{proof}

\begin{figure}
\begin{center}
\setlength{\unitlength}{1mm}
\begin{picture}(115,50)
% Figura Izquierda
\put(70,45){{{\scriptsize{$S$}}}}
\pscustom[linewidth=1.5pt,fillstyle=solid,fillcolor=gray]{
\pscurve(8,2.5)(9,1.5)(11,3)(9.5,4)(7.5,4.5)(7.5,3.5)(8,2.5)
}
\pscircle[linewidth=.5pt](8.95,3.55){.42}
\pscircle[linewidth=.5pt](8.95,3.55){.84}
%\psdot[dotsize=2pt .15](8.95,9.55)
\put(87,35){{{\tiny{$x$}}}}
\pscustom[linewidth=.5pt,fillstyle=solid,fillcolor=black]{
\pswedge(8.95,3.55){.41}{30}{90}
}
\put(82.9,30){{{\tiny{$B(x,r_n/2)$}}}}
%%%%%%%%%%%%%%%%%
\put(56,30){{{{$\longrightarrow$}}}}
\put(56,33){{{\scriptsize{$\{x\}\oplus$}}}}
%%%%%%%%%%%%%%%%%
%Figura derecha
\put(15,35){{{\tiny{$B(0,r_n)$}}}}
\put(30,12.7){{{\tiny{$C_{u,{{r_n}/2}}$}}}}
\psline[linewidth=.25pt]{->}(3.1,1.5)(2.8,2.4)
\pscircle[linewidth=.5pt](2.5,2.5){.42}
\pscircle[linewidth=.5pt](2.5,2.5){.84}
\psline[linewidth=.5pt]{<->}(1,2.5)(4,2.5)
\psline[linewidth=.5pt]{<->}(2.5,1)(2.5,4.5)
\pscustom[linewidth=.5pt,fillstyle=solid,fillcolor=black]{
%\psline(2.5,2.5)(2.85,2.6)
\pswedge(2.5,2.5){.42}{30}{90}
}
\end{picture}
\caption{\textit{For $x\in S$ under the conditions stated in Proposition \ref{lejos.R2}, we have that {\mbox{$\{x\}\oplus C_{u,r_n/2}\subset B(x,r_n/2)\subset S$}}.}}
\label{fig:cerca}
\end{center}
\end{figure}

%Given $x\in S$ with $d(x,\partial S)> r_n/2$, Proposition \ref{lejos.R2} provides,  independently of $x$, a lower bound for the probability of all the sets in an unavoidable family for $\mathcal{E}_{x,r_n}$. The given family consists of circular sectors with radius $r_n/2$ and central angle $\pi/3$. It is important to note that the collection of unit vectors $\mathcal{W}$ from which the circular sectors are defined is not unique. In particular, any rotation of $\mathcal{W}$ results in a new collection of unit vectors that could also be used to define a new unavoidable family with the same properties as the one considered in Proposition \ref{lejos.R2}.

Before proceeding to the definition of unavoidable families of sets for points $x\in S$ with $d(x,\partial S)\leq r_n/2$, we wish to emphasize some aspects of this kind of
families. Recall that for points which lie far away
from the boundary we have proved that it is enough to consider circular sectors with radius $r_n/2$ and central angle $\pi/3$. Using the same argument
for points $x\in S$ such that $\rho=d(x,\partial S)\leq r_n/2$ we only could infer that $B(x,\rho)\subset S$ and hence the lower bound for the probability of these
circular sectors would be of order $\rho^2$. However we can find larger unavoidable sets and improve this bound. To see this, assume without loss
of generality that $x=0$ and divide $B(0,r)$ into a finite number of sectors $C_{u,r}^\theta$ with $\theta>0$. Then for fixed $u$,

%So far we have considered circular sectors with central angle $\pi/3$. Which is the role of this angle? Could we have chosen circular sectors with a
%larger amplitude? And another kind of sets? Of course, we could have defined larger sets provided that they are unavoidable.

\begin{equation}\label{def:U}
U=\bigcap_{y\in C_{u,r}^{\theta}}B(y,r)
\end{equation}
is the largest set contained in $B(y,r)$ for all $y\in C_{u,r}^{\theta}$. %It fulfills the conditions to form an unavoidable family.
The measure of $U$ depends on $\theta$. For example,
if $\theta=\pi/2$ then we divide $B(0,r)$ into two circular
sectors with central angle $\pi$. In that case, it can be easily
proved that $U=\{0\}$. Smaller values of $\theta$ result in larger
sets $U$. In particular, Lemma \ref{cont.R2} shows that, fixed $\theta=\pi/6$, the
set in (\ref{def:U}) contains at least one circular sector
with central angle $\pi/3$. In Proposition \ref{cerca.R2} we
show that for points $x\in S$ with $\rho=d(x,\partial S)\leq r_n/2$
and $\theta=\pi/6$ we can give a lower bound for $P_X(U)$ of
order $r_n^{1/2}\rho^{3/2}$. Note that this bound is better than
the one we can obtain for circular sectors of $B(x,\rho)$. Hence,
Proposition \ref{cerca.R2} provides the second key result in the proof of
Theorem \ref{ordenR2}. At this point
it is worth discussing some of the properties of the sets
\begin{equation}\label{tri}
\bigcap_{y\in C_{u,r}}B(y,r),\textnormal{ with } u\in \mathbb{S}_2,\textnormal{ and } r>0.
\end{equation}
As we show in Lemma \ref{triangulo} below, these sets
are known in the literature as Reuleaux triangle\index{Reuleaux triangle},
see Figure \ref{fig:cw2}. They solve the problem
of finding unavoidable families of large sets for the bidimensional case. One
can be tempted to generalize the idea for the {\mbox{$d$-dimensional}}
case.
However the argument
in $\mathbb{R}^d$ is somewhat different since
it becomes tough to handle with the intersection
in (\ref{tri}) when $d>2$. Note that it is fundamental
not only to define large unavoidable sets but also to measure
them. This causes technical difficulties as
the dimension increases. %See \cite{Pateiro08} for details of the general case.

\begin{figure}[htb]
\begin{center}
\setlength{\unitlength}{1mm}
\begin{picture}(50,50)
\psline[linewidth=.5pt]{<->}(0,2.5)(5,2.5)
\psline[linewidth=.5pt]{<->}(2.5,0)(2.5,5)
\pscustom[linewidth=.5pt,fillstyle=solid,fillcolor=gray]{
\psarc(2.5,2.5){1}{15}{75}
\psarc(3.456,2.758){1}{135}{195}
\psarc(2.7588,3.4659){1}{255}{315}
}
\psline[linewidth=.5pt,linestyle=dashed]{->}(2.5,2.5)(4.81,4.81)
\pscircle[linewidth=.5pt](2.5,2.5){1}
\pscircle[linewidth=.5pt](3.456,2.758){1}
\pscircle[linewidth=.5pt](2.7588,3.4659){1}
\psdot[dotsize=2pt 1.5](2.5,2.5)
\psdot[dotsize=2pt 1.5](3.456,2.758)
\psdot[dotsize=2pt 1.5](2.7588,3.4659)
\put(23,23){{{\scriptsize{$0$}}}}
\put(44,46.4){{{\scriptsize{$u$}}}}
\put(26,37){{{\scriptsize{$v_1$}}}}
\put(36,27){{{\scriptsize{$v_2$}}}}
\end{picture}
\caption{\textit{Reuleaux triangle.}}
\label{fig:cw2}
\end{center}
\end{figure}
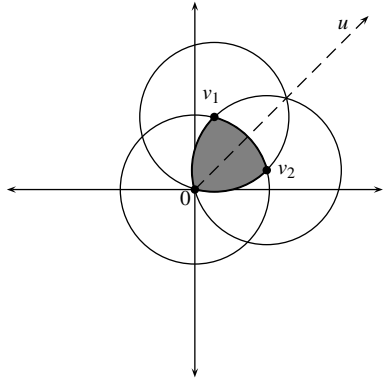
\begin{lemma}\label{triangulo}
%\begin{eqnarray*}
%&&c_1=r_nA(\theta)u,\\
%&&c_2=r_nA(-\theta)u.
%\end{eqnarray*}
Given $u\in \mathbb{S}_2$, we have
%\[T_{u,\theta,r_n}^\star=B(0,r_n)\cap B(c_1,r_n)\cap B(c_2,r_n)\]
\[\bigcap_{y\in C_{u,r}}B(y,r)=B(0,r)\cap B(v_1,r)\cap B(v_2,r),\]
where $v_1=r\mathcal{R}(u)$ and $v_2=r\mathcal{R}^{-1}(u)$, $\mathcal{R}:\mathbb{R}^2\longrightarrow\mathbb{R}^2$ being the counter-clockwise rotation of angle $\pi/6$.
\end{lemma}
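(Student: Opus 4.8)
The claim is that the infinite intersection $\bigcap_{y\in C_{u,r}}B(y,r)$ collapses to the intersection of just three balls, namely those centred at the ``corners'' of the circular sector $C_{u,r}$: the apex $0$ and the two extreme radii endpoints $v_1=r\mathcal{R}(u)$ and $v_2=r\mathcal{R}^{-1}(u)$. The inclusion $\subseteq$ is immediate: since $0,v_1,v_2\in C_{u,r}$ (the sector has central angle $\pi/3$, so the radii $v_1,v_2$ make angle $\pi/6$ with $u$ and by Lemma \ref{estaenC.R2} both lie in $C_u$, hence in $C_{u,r}$; and $0\in B(0,r)\subset$ every $B(y,r)$ trivially isn't needed here — rather $0\in C_{u,r}$), the big intersection is contained in each of $B(0,r)$, $B(v_1,r)$, $B(v_2,r)$. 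So the whole content of the lemma is the reverse inclusion $\supseteq$.

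For $\supseteq$, the plan is to take an arbitrary $z\in B(0,r)\cap B(v_1,r)\cap B(v_2,r)$ and an arbitrary $y\in C_{u,r}$, and show $\|z-y\|\le r$. I would do this by a convexity/extremal-point argument in the variable $y$: for fixed $z$, the function $y\mapsto \|z-y\|^2 = \|z\|^2 - 2\langle z,y\rangle + \|y\|^2$ is convex (in fact a paraboloid) on $\mathbb{R}^2$, so its maximum over the compact convex-looking region... but note $C_{u,r}$ is \emph{not} convex — it is a circular sector. Still, a convex function on a compact set attains its maximum at an extreme point of the convex hull of that set, and the extreme points of the convex hull of the sector $C_{u,r}$ are exactly $0$, $v_1$, $v_2$, together with the points of the circular arc from $v_1$ to $v_2$. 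So it suffices to additionally rule out the arc: I must check that for $y$ on the arc $\{r w : w\in\mathbb{S}_2,\ \varphi_{w,u}\le \pi/6\}$, one still has $\|z-y\|\le r$. Here I would use that such $y$ satisfies $y\in C_{u,r}$ with $\|y\|=r$, and invoke Lemma \ref{cont.R2}: since $z$ need not lie in $C_{u,r}$ this does not apply directly to $z$, so instead I would argue the maximum of the convex function over the closed arc is again attained at an endpoint $v_1$ or $v_2$ (a convex function restricted to an arc of a circle, parametrised by angle, is not convex in the angle, so a small computation is needed — writing $\|z-rw\|^2 = \|z\|^2 + r^2 - 2r\langle z,w\rangle$, maximising means \emph{minimising} $\langle z,w\rangle$ over the arc, and $w\mapsto\langle z,w\rangle$ on a circular arc is minimised at an endpoint \emph{unless} $-z/\|z\|$ lies in the arc, in which case the minimum is $-\|z\|$ and $\|z-rw\|^2 = \|z\|^2+r^2+2r\|z\|=(\|z\|+r)^2$ could exceed $r^2$).

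This last case — $-z$ pointing into the cone $C_u$ — is what I expect to be the main obstacle, and it is resolved precisely by the hypothesis $z\in B(v_1,r)\cap B(v_2,r)$ (not merely $z\in B(0,r)$): if $-z/\|z\|$ made angle at most $\pi/6$ with $u$, then $z$ would be ``far'' from both $v_1$ and $v_2$, which lie on the opposite side; a direct angle computation shows $\varphi_{z-v_1,\,\cdot}$ or equivalently $\|z-v_1\|$ would exceed $r$, contradicting $z\in B(v_1,r)$. So the three-ball description is sharp: the two arc-endpoint balls $B(v_1,r),B(v_2,r)$ are exactly what is needed to cut off the problematic antipodal direction, while $B(0,r)$ handles the apex. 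Assembling: reduce to extreme points of the convex hull of $C_{u,r}$ via convexity of $y\mapsto\|z-y\|^2$; on the arc, reduce to endpoints via the antipodal-direction analysis; conclude $\|z-y\|\le r$ for all $y\in C_{u,r}$, i.e. $z\in\bigcap_{y\in C_{u,r}}B(y,r)$, completing the reverse inclusion and hence the proof.
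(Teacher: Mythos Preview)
Your proposal is correct, but it takes a substantially different and more laborious route than the paper's. (One minor slip: you say $C_{u,r}$ is not convex, but in fact it is---it is the intersection of a convex cone with a ball, the central angle being $\pi/3<\pi$. This does not damage your argument, since you immediately pass to the convex hull, which is $C_{u,r}$ itself.)

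The paper's proof of the reverse inclusion is a two-line affair: given $x\in B(0,r)\cap B(v_1,r)\cap B(v_2,r)$ and $y\in C_{u,r}$, it invokes Lemma~\ref{cont.R2} (which says $C_{u,r}\subset\bigcap_{y\in C_{u,r}}B(y,r)$, hence in particular $C_{u,r}\subset B(0,r)\cap B(v_1,r)\cap B(v_2,r)$) to conclude that $y$ \emph{also} lies in the Reuleaux triangle $B(0,r)\cap B(v_1,r)\cap B(v_2,r)$. Since the Reuleaux triangle is a set of constant width $r$ and hence diameter $r$, one gets $\|x-y\|\le r$ immediately. You instead fix $z$ in the three-ball intersection and maximise the convex function $y\mapsto\|z-y\|^2$ over $C_{u,r}$, reducing first to the extreme points $\{0\}\cup\{\text{arc}\}$, then on the arc to the endpoints $v_1,v_2$ via an antipodal-direction case analysis (your computation that the antipodal case forces $\varphi_{z,v_1}\ge 2\pi/3$ and hence $\|z-v_1\|^2\ge\|z\|^2+r^2+\|z\|r>r^2$ is correct and is exactly what closes the loop). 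Your approach is self-contained---it does not need the constant-width/diameter fact about Reuleaux triangles---while the paper's approach is much shorter but imports that classical geometric fact as a black box.
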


\begin{remark}
As previously discussed, the set $B(0,r)\cap B(v_1,r)\cap B(v_2,r)$
in $\mathbb{R}^2$ is the so-called Reuleaux
triangle\index{Reuleaux triangle}. Formally, the Reuleaux triangle
is defined from an equilateral triangle with sides
of length $l$. It is constructed by drawing the arcs from
each polygon vertex of the equilateral triangle between the other
two vertices. Thus, the Reuleaux triangle is the set bounded by these
three arcs. An important property is that it is a set of
constant width $l$, see Figure \ref{fig:cw}. It is known
that the diameter of a set of constant width $l$ is precisely $l$.
See \cite{Benson}, \cite{Croft91}, \cite{Egg58}, and the references
cited therein for a detailed development of these concepts.
\end{remark}

\begin{proof}
It is straightforward to verify
\begin{equation}\label{tritri}
\bigcap_{y\in C_{u,r}}B(y,r)\subset B(0,r)\cap B(v_1,r)\cap B(v_2,r).
\end{equation}
Let us now consider the reverse content.
Let $x\in B(0,r)\cap B(v_1,r)\cap B(v_2,r)$ and $y\in  C_{u,r}$. We need to show that
$\left\|x-y\right\|\leq r$. It follows from (\ref{tritri})
that $y\in B(0,r)\cap B(v_1,r)\cap B(v_2,r)$ and hence, since the
diameter of the Reuleaux triangle\index{Reuleaux triangle}
is $r$, the result holds. \qed 
%{\textcolor{red}{Aqu? quiero demostrar de manera anal?tica el contenido, sin usar el concepto de tri?ngulo de Reuleaux}}
%Como consecuencia del Lema \ref{cont.R2},
%\begin{equation}\label{contC}
%C_{u,r}\subset B(0,r)\cap B(v_1,r)\cap B(v_2,r).
%\end{equation}
%Adem?s, demostraremos que la distancia entre dos puntos de {\mbox{$B(0,r)\cap B(v_1,r)\cap B(v_2,r)$}} es menor o igual que $r$, lo cual unido a (\ref{contC}) implica que
%\[B(0,r)\cap B(v_1,r)\cap B(v_2,r)\subset\bigcap_{y\in C_{u,r}}B(y,r).\]

\end{proof}

%
%We next give the definition of a set of constant width. The reader is referred to \cite{Benson}, \cite{Croft91}, \cite{Egg58} and the references cited therein for a detailed development of these concepts. Let $K$ be a compact and convex set in $\mathbb{R}^2$.
%
%\begin{defi}
%The diameter of a set $K$ is defined as the maximum of the distances between points in $K$.
%\end{defi}
%
%\begin{defi}
%Una recta $s$ se denomina recta soporte de $K$ si $s$ contiene al menos un punto de la frontera de $K$ y $K$ se encuentra totalmente contenido en uno de los semiplanos cerrados determinados por $s$.
%\end{defi}
%
%\begin{defi}
%Sea $v$ una recta perpendicular a dos rectas soporte paralelas $s$ y $s^\prime$ de $K$. Entonces la distancia $w$ entre $v\cap s$ y $v\cap s^\prime$ se denomina ancho de $K$ en la direcci?n $v$.
%\end{defi}
%
%\begin{defi}
%Se dice que $K$ es de ancho constante $w$ si el ancho de $K$ en cualquier direcci?n $v$ es igual a $w$.
%\end{defi}

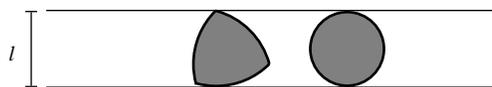
\begin{figure}[htb]
\begin{center}
\setlength{\unitlength}{1mm}
\begin{picture}(70,20)
\pscustom[linewidth=1pt,fillstyle=solid,fillcolor=gray]{
\psarc(2.5,0.5){1}{15}{75}
\psarc(3.456,0.758){1}{135}{195}
\psarc(2.7588,1.4659){1}{255}{315}
}
\pscustom[linewidth=1pt,fillstyle=solid,fillcolor=gray]{
\psarc(4.5,0.96){0.49}{0}{360}
}
\psline[linewidth=.5pt]{-}(0.5,0.45)(6.5,0.45)
\psline[linewidth=.5pt]{-}(0.5,1.471)(6.5,1.471)
\psline[linewidth=.5pt]{|-|}(0.3,1.4659)(0.3,0.45)
\put(0,8){{{\scriptsize{$l$}}}}
\end{picture}
\caption{\textit{Sets of constant width.}}
\label{fig:cw}
\end{center}
\end{figure}

%Si aceptamos que un tri?ngulo de Reuleaux es un conjunto de ancho constante $l$, entonces el siguiente resultado ser?a una prueba alternativa del Lema \ref{triangulo}.
%\begin{lemma}\label{ancho.cte}
%El di?metro de un conjunto $K$ de ancho constante $w$ es igual a $w$.
%\end{lemma}
%\begin{proof}
%El resultado se sigue de los Teoremas 10.3 y 10.4 de \cite{Benson}.

%\end{proof}
%
%As?, el tri?ngulo de Reuleaux es un conjunto de ancho constante $l$, siendo $l$ la longitud del lado del tri?ngulo equil?tero a partir del cual se construye el conjunto. En virtud del Teorema \ref{ancho.cte} se tiene que la mayor distancia entre puntos del conjunto es $l$.

We now concentrate on the points $x$ which are close to the
 boundary of $S$. Recall that by points which are close to the boundary of $S$ we mean those $x\in S$ such that $d(x,\partial S)\leq r_n/2$. As previously described, we shall consider in this context unavoidable sets  which are larger than the circular sectors used for points away from $\partial S$. The unavoidable sets $U$ we shortly define guarantee a lower bound for $P_X(U)$ of order $r_n^{1/2}d(x,\partial S)^{3/2}$. Proposition \ref{cerca.R2} makes these ideas precise.

\begin{proposition}\label{cerca.R2}
Under the conditions of Theorem \ref{ordenR2}, {for all} $x\in S$
such that $d(x,\partial S)\leq r_n/2$, there exists a finite family $\mathcal{U}_{x,r_n}$ with $m_2=6$ elements, unavoidable for $\mathcal{E}_{x,r_n}$ and that satisfies
\[P_X(U)\geq L_2r_n^{\frac{1}{2}}d(x,\partial S)^{\frac{3}{2}},\ \ U\in \mathcal{U}_{x,r_n},\]
where the constant $L_2>0$ is independent of $x$.
\end{proposition}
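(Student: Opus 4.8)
The plan is to mimic the structure of the proof of Proposition \ref{lejos.R2}, replacing the circular sectors $C_{u,r_n/2}$ by Reuleaux-triangle-type sets, and then to carefully estimate the Lebesgue measure of the portion of such a set that lies inside $S$. Fix $x\in S$ with $\rho=d(x,\partial S)\le r_n/2$, and let $s\in\partial S$ be a point realizing this distance, with $\eta=\eta(s)$ the outward unit normal guaranteed by the rolling condition (R); thus $B(s-\alpha\eta,\alpha)\subset S$ and $B(s+\alpha\eta,\alpha)\subset\overline{S^c}$, and $x=s-\rho\eta$. For each $u$ in the same six-element set of directions $\mathcal{W}$ used before, set $U_u=\{x\}\oplus\bigl(\bigcap_{y\in C_{u,r_n/2}}B(y,r_n/2)\bigr)$. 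By Lemma \ref{cont.R2} the undisplaced Reuleaux triangle $\bigcap_{y\in C_{u,r_n/2}}B(y,r_n/2)$ contains the circular sector $C_{u,r_n/2}^{\pi/6}$, so exactly the argument of Proposition \ref{lejos.R2} (via Lemma \ref{cont.R2} and the translation Lemma \ref{traslada2}) shows $\mathcal{U}_{x,r_n}=\{U_u:u\in\mathcal{W}\}$ is unavoidable for $\mathcal{E}_{x,r_n}$ and has $m_2=6$ elements. It remains to bound $P_X(U_u)\ge\delta\mu(U_u\cap S)$ from below by $L_2 r_n^{1/2}\rho^{3/2}$.

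The heart of the matter is the geometric estimate $\mu(U_u\cap S)\gtrsim r_n^{1/2}\rho^{3/2}$, uniformly in $x$ and $u$. Here one should think of the worst case: a direction $u$ for which the displaced Reuleaux triangle $U_u$ pokes outward through $\partial S$ near $s$. The boundary $\partial S$ near $s$ is trapped between the inner ball $B(s-\alpha\eta,\alpha)$ and the complement ball $B(s+\alpha\eta,\alpha)$, so locally $\partial S$ lies inside a parabolic-type region of "height" comparable to (distance along the tangent)${}^2/\alpha$. The set $U_u$ has a vertex at $x$, which is at distance $\rho$ from $\partial S$, and its two bounding circular arcs emanate from $x$; over a tangential window of width $w$ around the foot $s$, the part of $U_u$ lying on the far side of $\partial S$ is controlled, while the part comfortably inside $S$ contains a region of area on the order of $\rho\cdot w$ minus the lost parabolic cap of area on the order of $w^3/\alpha$. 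Optimizing $w\sim\sqrt{\alpha\rho}$ gives a guaranteed area of order $\rho\cdot\sqrt{\alpha\rho}=\sqrt{\alpha}\,\rho^{3/2}$; since $\alpha\ge r_n$ this yields a bound of order $r_n^{1/2}\rho^{3/2}$. (One also uses $\rho\le r_n/2$ to ensure the window $w\sim\sqrt{\alpha\rho}$ is not larger than the size $r_n$ of $U_u$ itself, so the estimate is not vacuous.) The constant $L_2$ extracted this way depends only on $\delta$ and on universal constants, not on $x$.

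The main obstacle I expect is precisely this area computation: one must choose the axis of $U_u$ (i.e. which direction $u$ is worst) and a rectangular or circular-segment subregion of $U_u$ anchored near $x$ that is provably contained in $S$, and then lower-bound its area. Making "provably contained in $S$" rigorous requires the two-sided ball bound on $\partial S$ from (R) to show that points of $U_u$ within the chosen window and on the inner side of the tangent hyperplane at $s$ cannot escape $S$ — this is where the positive-reach / Serra-regular structure is genuinely used, and where the corresponding $d$-dimensional statement (Remark \ref{Remark2}) becomes delicate. A secondary technical point is uniformity: the same $L_2$ must work for all $x$, including $x$ deep enough that $\rho$ is tiny, for which one should check that the Reuleaux triangle $U_u$ (of fixed size $\sim r_n$) always contains the claimed window-region and that the parabolic correction never eats more than, say, half of it. Once the area bound $\mu(U_u\cap S)\ge c\,r_n^{1/2}\rho^{3/2}$ is in hand, multiplying by $\delta$ finishes the proof with $L_2=c\delta$. \qed
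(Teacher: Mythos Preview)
Your proposal has a genuine gap in the area estimate, and it occurs precisely at the step you flagged as ``the main obstacle''.

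First a minor point: your justification of unavoidability is backwards. Saying that the Reuleaux triangle \emph{contains} $C_{u,r_n/2}$ goes the wrong way for the inclusion $U_u\subset B(y,r_n)$; what you actually need is that the small Reuleaux triangle is contained in the large one, $\bigcap_{z\in C_{u,r_n/2}}B(z,r_n/2)\subset\bigcap_{z\in C_{u,r_n}}B(z,r_n)$, which does hold and gives unavoidability. This is easily repaired.

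The real problem is the claimed bound $\mu(U_u\cap S)\gtrsim r_n^{1/2}\rho^{3/2}$. Take the worst direction, namely a $u\in\mathcal W$ within $\pi/6$ of the outward normal $\eta$. In suitable coordinates ($x=0$, $\eta=e_2$) one checks that the Reuleaux triangle $U_u$ (vertex at $x$) lies entirely in the half-plane $\{x_2\ge 0\}$, i.e.\ on the \emph{boundary} side of $x$. Near its vertex $0$ it is a wedge whose half-width at height $h$ is of order $h$, so the part of $U_u$ below the tangent line through $s=(0,\rho)$ has area $O(\rho^2)$. The region above that line but still outside the outer rolling ball $B((0,\rho+\alpha),\alpha)$ contributes only $O(\rho^3/\alpha)$ more. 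Hence $\mu(U_u\cap S)=O(\rho^2)$, and since $\rho^2/(r_n^{1/2}\rho^{3/2})=(\rho/r_n)^{1/2}\to 0$ as $\rho\to 0$, no uniform constant $L_2$ can work. Your ``tangential window of width $w$'' heuristic implicitly assumes $U_u$ has width $\sim r_n$ near the boundary, but for this $u$ the width near $\partial S$ is only $\sim\rho$.

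The paper's fix is exactly to \emph{not} use the same six directions for every $x$. It first rotates so that $-\eta$ becomes $e_2$, and then splits $B(0,r_n)$ into a ``good'' part $\mathcal G_{r_n}$ (four sectors pointing inward or tangentially, for which ordinary sectors $C_{u,r_n}$ already meet the inner rolling ball in area $\gtrsim r_n^{1/2}\rho^{3/2}$) and a ``bad'' part $\mathcal F_{r_n}$ (two sectors pointing outward). For the bad sectors the unavoidable sets are not the sectors themselves but carefully chosen circular-segment pieces $Q_i\cap\mathcal C(h_1)$, where $\mathcal C(h_1)$ is the cap of $B(-r_ne_2,r_n)$ between heights $-h_1$ and $0$, with $h_1\asymp\rho$. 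One then verifies three things separately: (i) $\mathcal C(h_1)$ lies inside the inner rolling ball, (ii) each half $Q_i\cap\mathcal C(h_1)$ lies inside the corresponding Reuleaux triangle (hence inside every $B(y,r_n)$ with $y$ in that bad sector), and (iii) $\mu(\mathcal C(h_1))\gtrsim r_n^{1/2}\rho^{3/2}$ by a direct integral. The resulting family again has $4+2=6$ elements, but they depend on $\eta(x)$, and the two ``bad'' members are not sectors or Reuleaux triangles at all. This dependence on the normal direction is what your fixed-$\mathcal W$ construction is missing.
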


\begin{proof}Let $x\in S$ such that $\rho= d(x,\partial S)\leq r_n/2<\alpha$. Since ${\textnormal{reach}}(\overline{S^c})\geq \alpha$  there
exists a unique point $P_{\Gamma}x\in\partial S$ such that $\rho=\|x- P_{\Gamma}x\|$. The rolling condition ensures the existence of  an unique
unit vector $\eta\equiv\eta(P_{\Gamma}x)$ such that
%{\textcolor{red}{APPENDIX!! By the free rolling condition in $S$ and $\rho\leq r_n/2<\alpha$, it follows that the projection $P_{\Gamma}x$ is unique. Moreover, under the stated conditions, Theorem \ref{teor1} establishes that $\Gamma$ is a {\mbox{1-dimensional}} submanifold in $\mathbb{R}^2$ with the outward pointing unit normal vector $\eta$ at $P_{\Gamma}x$ satisfying the Lipschitz condition}}. We have
$B(P_{\Gamma}x-\alpha\eta,\alpha)\subset S$
and therefore, given an unavoidable family $\mathcal{U}_{x,r_n}$,
\begin{equation}\label{des.medida}
P_X(U)\geq\delta\mu(U\cap S)\geq\delta\mu(U\cap B(P_{\Gamma}x-\alpha\eta,\alpha)),\ \ U\in \mathcal{U}_{x,r_n}.
\end{equation}
Note that this simplifies the proof since by (\ref{des.medida}) it follows that we just need to define a suitable family $\mathcal{U}_{x,r_n}$ and bound {\mbox{$\mu(U\cap B(P_{\Gamma}x-\alpha\eta,\alpha))$}} for $U\in\mathcal{U}_{x,r_n}$.
Let us consider a composite function $T$ formed by first applying an orthogonal transformation $\mathcal{O}:\mathbb{R}^2\longrightarrow \mathbb{R}^2$ such that $\mathcal{O}(e_2)=-\eta$ and then applying the translation by the vector $x$, see Figure \ref{transfT}. In particular $T(0)=x$, $T((\alpha-\rho)e_2)=x-(\alpha-\rho)\eta=P_\Gamma x-\alpha\eta$, and $$T(B((\alpha-\rho)e_2,\alpha))=B(P_\Gamma x-\alpha\eta,\alpha).$$

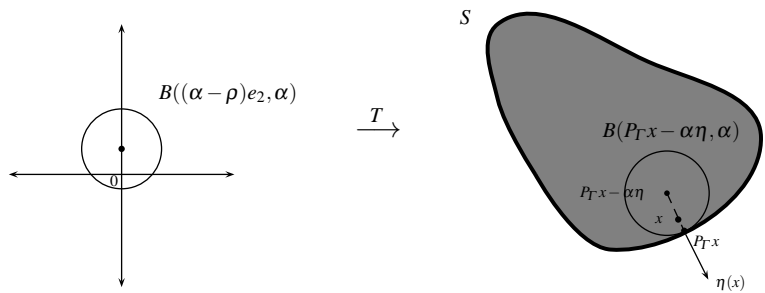
\begin{figure}[htb]
\begin{center}
\setlength{\unitlength}{1mm}
\begin{picture}(115,50)
%Figura derecha
\pscustom[linewidth=1.5pt,fillstyle=solid,fillcolor=gray]{
\pscurve(8,2.5)(9,1.5)(11,3)(9.5,4)(7.5,4.5)(7.5,3.5)(8,2.5)
}
\pscircle[linewidth=.5pt](9.75,2.25){.57}
\psdot[dotsize=2pt .15](9.75,2.25)
\psdot[dotsize=2pt .5](9.9,1.9)
\psdot[dotsize=2pt .5](9.98,1.75)
\put(96,18.5){{{\tiny{$x$}}}}
\put(86,22){{{\tiny{$P_{\Gamma}x-\alpha\eta$}}}}
\put(101,15.5){{{\tiny{$P_{\Gamma}x$}}}}
\put(104,10){{{\tiny{$\eta(x)$}}}}
\put(89,30){{{\scriptsize{$B(P_{\Gamma}x-\alpha\eta,\alpha)$}}}}
\put(70,45){{{\scriptsize{$S$}}}}
\psline[linewidth=.5pt,linestyle=dashed]{-}(9.75,2.25)(10,1.7)
\psline[linewidth=.5pt]{->}(10,1.7)(10.3,1.1)
%%%%%%%%%%%%%%%
\put(56,30){{{{$\longrightarrow$}}}}
\put(58,32){{{\scriptsize{$T$}}}}
%%%%%%%%%%%%%%%
%Figura izquierda
\psline[linewidth=.5pt]{<->}(1,2.5)(4,2.5)
\psline[linewidth=.5pt]{<->}(2.5,1)(2.5,4.5)

\pscircle[linewidth=.5pt](2.5,2.84){0.54}
\psdot[dotsize=2pt .5](2.5,2.84)
\put(30,35){{{\scriptsize{$B((\alpha-\rho)e_2,\alpha)$}}}}
\put(23.5,23.46){{{\tiny{$0$}}}}
\end{picture}
\caption{\textit{For the function $T$,  $T(B((\alpha-\rho)e_2,\alpha))=B(P_{\Gamma}x-\alpha\eta,\alpha)$.}}
\label{transfT}
\end{center}
\end{figure}
%Then, let $\mathcal{U}_{0,r_n}$ be an unavoidable family for $\mathcal{E}_{0,r_n}$.

It can be easily seen that the following result holds.
\begin{lemma}\label{gira}
Let $\mathcal{U}_{0,r}$ be an unavoidable family for $\mathcal{E}_{0,r}$ and let $\mathcal{O}:\mathbb{R}^2\longrightarrow \mathbb{R}^2$\glossary{ $\mathcal{O}$&Orthogonal transformation} be an orthogonal transformation. Then $\{\mathcal{O}(U),\ U\in \mathcal{U}_{0,r}\}$ is also an unavoidable family for $\mathcal{E}_{0,r}$.
\end{lemma}
%\begin{proof}
%Let $B(y,r)\in\mathcal{E}_{0,r}$. Then $y\in B(0,r)$ and using that $\mathcal{O}$ is an orthogonal transformation, we have that $\mathcal{O}^{-1}(y)\in B(0,r)$. As $\mathcal{U}_{0,r}$ is an unavoidable family for $\mathcal{E}_{0,r}$, there exists $U\in\mathcal{U}_{0,r}$ such that $U\subset B(\mathcal{O}^{-1}(y),r)$. The result is now immediate since
%\[\mathcal{O}(U) \subset \mathcal{O}(B(\mathcal{O}^{-1}(y),r))= B(y,r).\]
%\end{proof}

What Lemma \ref{gira} asserts is that the orthogonal transformation of an unavoidable family for $\mathcal{E}_{0,r_n}$ results in another unavoidable family for $\mathcal{E}_{0,r_n}$. On the other hand, Lemma \ref{traslada2} established that the result of the translation of an unavoidable family for $\mathcal{E}_{0,r_n}$ by the vector $x$ is an unavoidable family for $\mathcal{E}_{x,r_n}$. As an immediate consequence, we obtain that
$\mathcal{U}_{x,r_n}=\{T(U), U\in \mathcal{U}_{0,r_n}\}$
is unavoidable for $\mathcal{E}_{x,r_n}$. Furthermore,
\[\mu(T(U)\cap B(P_{\Gamma}x-\alpha\eta,\alpha))=\mu(U\cap B((\alpha-\rho)e_2,\alpha)),\]
as the Lebesgue measure is invariant under translations and orthogonal transformations. Thus, the problem reduces to defining an unavoidable family $\mathcal{U}_{0,r_n}$ for $\mathcal{E}_{0,r_n}$ and finding a lower bound for {\mbox{$\mu(U\cap B((\alpha-\rho)e_2,\alpha))$}} for all $U\in\mathcal{U}_{0,r_n}$.

Before continuing the proof of Proposition \ref{cerca.R2}, it may be useful to make some comments concerning the measure of the sets $U\cap B((\alpha-\rho)e_2,\alpha)$. Note that when defining unavoidable sets for $\mathcal{E}_{0,r_n}$, the main difficulty in giving a lower bound for {\mbox{$\mu(U\cap B((\alpha-\rho)e_2,\alpha))$}} arises with those points which lie far away in the direction of the vector $-e_2$. In fact,
\begin{equation*}\label{bolas}
\min_{y\in B(0,r_n)}\mu\left(B(y,r_n)\cap B((\alpha-\rho)e_2,\alpha)\right)=\mu(B(-r_ne_2,r_n)\cap B((\alpha-\rho)e_2,\alpha))
\end{equation*}
since $y=-r_ne_2$ represents the point where the distance between the centres of both balls attains its maximum and, as a direct consequence, the intersection its minimum. Recall that, by the definition of unavoidable family, for each $y\in B(0,r_n)$ there exists {\mbox{$U\in \mathcal{U}_{0,r_n}$}} such that {\mbox{$U\subset B(y,r_n)$}}. So, it is more involved to find unavoidable sets $U$ with large enough {\mbox{$\mu(U\cap B((\alpha-\rho)e_2,\alpha))$}} for points close to $-r_ne_2$. This motivates dividing $B(0,r_n)$ into two subsets as follows
$B(0,r_n)=\mathcal{G}_{r_n}\cup \mathcal{F}_{r_n}$
where
\begin{equation*}%\label{good2}
\mathcal{G}_{r_n}=\left\{y\in B(0,r_n):\ \left\langle y, e_2\right\rangle\geq -\frac{1}{2}\left\|y\right\|\right\}\mbox{ and }
\mathcal{F}_{r_n}=\left\{y\in B(0,r_n):\ \left\langle y, e_2\right\rangle< -\frac{1}{2}\left\|y\right\|\right\}.
\end{equation*}

%\begin{equation*}%\label{fuck2}
%\mathcal{F}_{r_n}=\left\{y\in B(0,r_n):\ \left\langle y, e_2\right\rangle< -\frac{1}{2}\left\|y\right\|\right\}.
%\end{equation*}
Figure \ref{GrFr2} shows the sets $\mathcal{G}_{r_n}$ and $\mathcal{F}_{r_n}$. Roughly speaking, $\mathcal{F}_{r_n}$ contains the points $y\in B(0,r_n)$ for which {\mbox{$B(y,r_n)\cap B((\alpha-\rho)e_2,\alpha)$}} is small. Therefore, the unavoidable sets $U$ in this case should be carefully selected. On the contrary, $\mathcal{G}_{r_n}$ contains the points $y\in B(0,r_n)$ for which {\mbox{$B(y,r_n)\cap B((\alpha-\rho)e_2,\alpha)$}} is larger. %We shall see that this case is simpler to deal with.
For these points the sets $U$ can be circular sectors. Proposition \ref{propositionbuenos2} shows that $\mu(U\cap B((\alpha-\rho)e_2,\alpha))$ is then large enough.
% We can consider again circular sectors without affecting the order of the lower bound for $\mu(U\cap B((\alpha-\rho)e_2,\alpha))$.

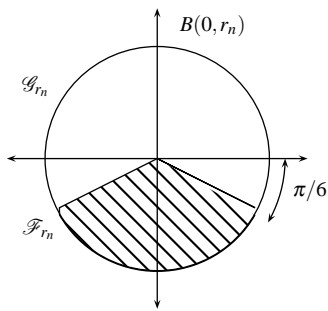
\begin{figure}[htb]
\begin{center}
\setlength{\unitlength}{1mm}
\begin{picture}(50,45)

\psline[linewidth=.5pt]{<->}(0,2)(4,2)
\psline[linewidth=.5pt]{<->}(2,0)(2,4)
%\psline[linewidth=.5pt]{->}(2,2)(.5,.5)
\pscustom[linewidth=.5pt,fillstyle=vlines]{
\psline[linewidth=.5pt]{-}(2,2)(3.3,1.346)
\psline[linewidth=.5pt]{-}(2,2)(.7,1.346)
\psarc(2,2){1.5}{210}{330}
}
%\psellipse[linewidth=.75pt,linestyle=dashed](2,1.34)(1.35,.15)
\psarc[linewidth=.5pt]{<->}(2,2){1.7}{330}{360}
\put(38,15){{{\scriptsize{$\pi/6$}}}}
\put(23,37){{{\scriptsize{$B(0,r_n)$}}}}
\put(2,29){{{\scriptsize{$\mathcal{G}_{r_n}$}}}}
\put(2,10){{{\scriptsize{$\mathcal{F}_{r_n}$}}}}
%\psellipse[linewidth=.5pt,linestyle=dashed](2,2)(1.5,.15)
\pscircle[linewidth=.5pt](2,2){1.5}
\end{picture}
\caption{\textit{$\mathcal{G}_{r_n}$ and $\mathcal{F}_{r_n}$.}}
\label{GrFr2}
\end{center}
\end{figure}

\begin{proposition}\label{propositionbuenos2}
There exists a finite set of unit vectors {\mbox{$\mathcal{W}^{\mathcal{G}}\subset\mathbb{S}_2$}} with $m^{\mathcal{G}}=4$ elements such that, for all $y\in\mathcal{G}_{r_n}$, there
 exists $u\in\mathcal{W}^{\mathcal{G}}$ such that {\mbox{$y\in C_{u,r_n}\subset B(y,r_n)$}} and
\begin{equation*}\label{obj2}
\mu(C_{u,r_n}\cap B((\alpha-\rho)e_2,\alpha))\geq L^\mathcal{G} r_n^{\frac{1}{2}}\rho^{\frac{3}{2}},
\end{equation*}
where $L^\mathcal{G}>0$ is a constant.
\end{proposition}
\begin{proof}Let us consider the set $\mathcal{W}^{\mathcal{G}}=\{(1,0),(-1,0),(1/2,\sqrt{3}/2),(-1/2,\sqrt{3}/2)\}$. It is straightforward to verify, see Figure \ref{fig:GR2}, that
$\mathcal{G}_{r_n}=\bigcup_{u\in\mathcal{W}^{\mathcal{G}}}C_{u,r_n}.$
Therefore, for all $y\in\mathcal{G}_{r_n}$ there exists $u\in\mathcal{W}^{\mathcal{G}}$ such that {\mbox{$y\in C_{u,r_n}$}}. By Lemma \ref{cont.R2} it follows
that $C_{u,r_n}\subset B(y,r_n)$. It remains to find a lower bound for $C_{u,r_n}\cap B((\alpha-\rho)e_2,\alpha)$ for $u\in\mathcal{W}^{\mathcal{G}}$. Note that at least
half of the set $C_{u,r_n}$ is contained in the halfplane $H_0=\{x=(x_1,x_2) \in\mathbb{R}^2:\ x_2\geq 0\}$ and hence it is sufficient for our purposes to concentrate on $C_{u,r_n}\cap H_0$.

\begin{figure}[htb]
\begin{center}
\setlength{\unitlength}{1mm}
\begin{picture}(130,45)
\psline[linewidth=.5pt]{<->}(4.2,2)(8.8,2)
\psline[linewidth=.5pt]{<->}(6.5,0)(6.5,4)
\psarc[linewidth=.5pt](6.5,2){1.5}{330}{210}
\psline[linewidth=.5pt]{-}(6.5,2)(7.79,1.25)
\psline[linewidth=.5pt]{-}(6.5,2)(5.21,1.25)

\psarc[linewidth=.5pt]{<->}(6.5,2){1.7}{330}{360}
\put(82,13){{{\scriptsize{$\pi/6$}}}}

\pscustom[linewidth=.5pt,hatchwidth=.2pt,fillstyle=hlines]{
\psline[linewidth=.5pt]{-}(6.5,2)(7.78,1.25)
\psarc(6.5,2){1.5}{330}{30}
\psline[linewidth=.5pt]{-}(7.78,2.75)(6.5,2)
}
\pscustom[linewidth=.5pt,hatchwidth=.2pt,fillstyle=vlines]{
\psline[linewidth=.5pt]{-}(6.5,2)(7.78,2.75)
\psarc(6.5,2){1.5}{30}{90}
\psline[linewidth=.5pt]{-}(6.5,3.5)(6.5,2)
}
\pscustom[linewidth=.5pt,hatchwidth=.2pt,fillstyle=hlines]{
\psline[linewidth=.5pt]{-}(6.5,2)(6.5,3.5)
\psarc(6.5,2){1.5}{90}{150}
\psline[linewidth=.5pt]{-}(5.22,2.75)(6.5,2)
}
\pscustom[linewidth=.5pt,hatchwidth=.2pt,fillstyle=vlines]{
\psline[linewidth=.5pt]{-}(6.5,2)(5.22,2.75)
\psarc(6.5,2){1.5}{150}{210}
\psline[linewidth=.5pt]{-}(5.22,1.25)(6.5,2)
}
\psline[linewidth=1pt]{->}(6.5,2)(8,2)
\psline[linewidth=1pt]{->}(6.5,2)(5,2)
\psline[linewidth=1pt]{->}(6.5,2)(7.23,3.29)
\psline[linewidth=1pt]{->}(6.5,2)(5.77,3.29)
\psline[linewidth=.5pt]{-}(6.5,2)(6.5,3.5)
\put(81,21.5){{{\scriptsize{$(1,0)$}}}}
\put(40.5,21.5){{{\scriptsize{$(-1,0)$}}}}
\put(72,35){{{\scriptsize{$\left(\frac{1}{2},\frac{\sqrt{3}}{2}\right)$}}}}
\put(43,35){{{\scriptsize{$\left(-\frac{1}{2},\frac{\sqrt{3}}{2}\right)$}}}}
\end{picture}
\caption{\textit{Unit vectors  $\mathcal{W}^{\mathcal{G}}=\{(1,0),(-1,0),(1/2,\sqrt{3}/2),(-1/2,\sqrt{3}/2)\}$ and $C_{u,r_n}$, for $u\in\mathcal{W}^{\mathcal{G}}$.}}
\label{fig:GR2}
\end{center}
\end{figure}
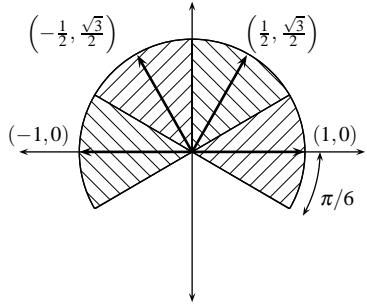

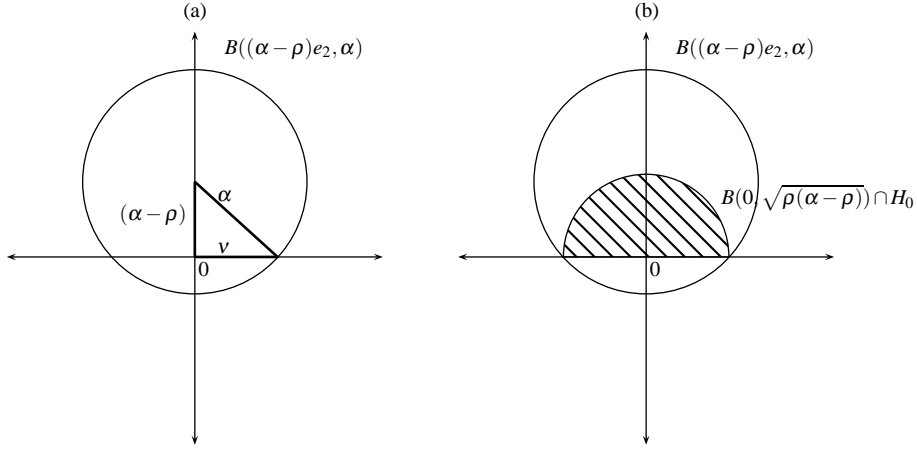
\begin{figure}
\begin{center}
\setlength{\unitlength}{1mm}
\begin{picture}(120,60)

\psline[linewidth=.5pt]{<->}(0,2.5)(5,2.5)
\psline[linewidth=.5pt]{<->}(2.5,0)(2.5,5.5)
\pscircle[linewidth=.5pt](2.5,3.5){1.5}
\psline[linewidth=1pt]{-}(2.5,3.5)(3.6,2.5)
\psline[linewidth=1pt]{-}(2.5,3.5)(2.5,2.5)
\psline[linewidth=1pt]{-}(2.5,2.5)(3.6,2.5)

\put(25.5,22.5){{{\scriptsize{$0$}}}}

\put(15,30){{{\scriptsize{$(\alpha-\rho)$}}}}
\put(28,25.5){{{\scriptsize{$\nu$}}}}
\put(28,32.5){{{\scriptsize{$\alpha$}}}}

\psline[linewidth=.5pt]{<->}(6,2.5)(11,2.5)
\psline[linewidth=.5pt]{<->}(8.5,0)(8.5,5.5)
\pscircle[linewidth=.5pt](8.5,3.5){1.5}

\pscustom[linewidth=.5pt,fillstyle=vlines]{
\psarc[linewidth=.5pt](8.5,2.5){1.1}{0}{180}
\psline{-}(7.4,2.5)(9.6,2.5)
}

\put(85.5,22.5){{{\scriptsize{$0$}}}}

\put(29,52){{{\scriptsize{$B((\alpha-\rho)e_2,\alpha)$}}}}
\put(89,52){{{\scriptsize{$B((\alpha-\rho)e_2,\alpha)$}}}}
\put(95,32){{{\scriptsize{$B(0,\sqrt{\rho(\alpha-\rho)})\cap H_0$}}}}

\put(23.5,57){{{\scriptsize{(a)}}}}
\put(83.5,57){{{\scriptsize{(b)}}}}
\end{picture}
\caption{\textit{(a) $\nu=\sqrt{\rho(2\alpha-\rho)}$. (b) $ B(0,\nu)\cap H_0 \subset B((\alpha-\rho)e_2,\alpha)$.}}
\label{fig:pita2}
\end{center}
\end{figure}

Let $\nu=\sqrt{\rho(2\alpha-\rho)}$. By the Pythagorean theorem, it is straightforward
to see that $\nu$ represents the distance to the origin from the points such that $\partial B((\alpha-\rho)e_2,\alpha)$ intersects the axis $OX$, see Figure \ref{fig:pita2}. It is also easy to
show that $B(0,\nu)\cap H_0\subset B((\alpha-\rho)e_2,\alpha)$. Therefore, for $u\in\mathcal{W}^{\mathcal{G}}$, $C_{u,\tau_n}\cap H_0\subset C_{u,r_n}\cap B((\alpha-\rho)e_2,\alpha)$
where $\tau_n=\min(\nu,r_n)$. This yields,
\begin{equation*}%\label{acotoU2}
\mu(C_{u,r_n}\cap B((\alpha-\rho)e_2,\alpha))\geq\mu(C_{u,\tau_n}\cap H_0)\geq\frac{1}{2}\mu(C_{u,\tau_n})=\frac{\pi}{12}\tau_n^2\geq \frac{\pi}{12}r_n^{1/2}\rho^{3/2}.
\end{equation*}
%The second inequality is a direct consequence of the definition of $\mathcal{W}^\mathcal{G}$, see Figure \ref{fig:mitad}, whereas the last one follows from the fact that $\rho<r_n\leq\alpha$.
This completes the proof of Proposition \ref{propositionbuenos2}, with $L^\mathcal{G}=\pi/{12}>0$ constant. \qed 

\end{proof}

In view of Proposition \ref{propositionbuenos2} we define the family
$\mathcal{U}_{0,r_n}^\mathcal{G}=\{C_{u,r_n}, u\in\mathcal{W}^{\mathcal{G}}\},$
formed by $m^\mathcal{G}=4$ elements. We now turn to the points in $\mathcal{F}_{r_n}$. The aim is to define for those points a finite family $\mathcal{U}^{\mathcal{F}}_{0,r_n}$, such that, for all $y\in\mathcal{F}_{r_n}$, there exists $U\in\mathcal{U}^{\mathcal{F}}_{0,r_n}$ that satisfies $U\subset B(y,r_n)$ and
\begin{equation}\label{bufR2}
\mu(U\cap B((\alpha-\rho)e_2,\alpha))\geq L^{\mathcal{F}} r_n^{\frac{1}{2}}\rho^{\frac{3}{2}},\ \ \forall U\in \mathcal{U}^{\mathcal{F}}_{0,r_n}.
\end{equation}
At this point, it may be useful to make some comments concerning the main differences between $\mathcal{G}_{r_n}$ and $\mathcal{F}_{r_n}$.
%As we have already mentioned, a more complicated detail will be required to handle the points of $\mathcal{F}_{r_n}$. However,
One might be tempted to proceed as before for $\mathcal{F}_{r_n}$ and define the set of unit vectors
$\mathcal{W}^\mathcal{F}=\{(-1/2,-\sqrt{3}/2),(1/2,-\sqrt{3}/2)\}.$
Again we would have that, see Figure \ref{fig:FR2} (a), $\mathcal{F}_{r_n}=\bigcup_{u\in\mathcal{W}^{\mathcal{F}}}C_{u,r_n}.$

%\begin{figure}[htb]
%\begin{center}
%\setlength{\unitlength}{1mm}
%\begin{picture}(130,45)
%\psline[linewidth=.5pt]{<->}(4.2,2)(8.8,2)
%\psline[linewidth=.5pt]{<->}(6.5,0)(6.5,4)
%\psline[linewidth=.5pt]{-}(6.5,2)(7.79,1.25)
%\psline[linewidth=.5pt]{-}(6.5,2)(5.21,1.25)
%\pscustom[linewidth=.5pt,hatchwidth=.2pt,fillstyle=hlines]{
%\psline[linewidth=.5pt]{-}(6.5,2)(6.5,.5)
%\psarc(6.5,2){1.5}{270}{330}
%\psline[linewidth=.5pt]{-}(7.78,1.25)(6.5,2)
%}
%\pscustom[linewidth=.5pt,hatchwidth=.2pt,fillstyle=vlines]{
%\psline[linewidth=.5pt]{-}(6.5,2)(5.22,1.25)
%\psarc(6.5,2){1.5}{210}{270}
%\psline[linewidth=.5pt]{-}(6.5,.5)(6.5,2)
%}
%\psline[linewidth=1pt]{->}(6.5,2)(7.23,0.71)
%\psline[linewidth=1pt]{->}(6.5,2)(5.77,0.71)
%\psline[linewidth=.5pt]{-}(6.5,2)(6.5,3.5)
%\put(74,4){{{\scriptsize{$\left(\frac{1}{2},-\frac{\sqrt{3}}{2}\right)$}}}}
%\put(40,4){{{\scriptsize{$\left(-\frac{1}{2},-\frac{\sqrt{3}}{2}\right)$}}}}
%\end{picture}
%\caption{\textit{Familia de direcciones  $\mathcal{W}^{\mathcal{F}}=\{(-1/2,-\sqrt{3}/2),(1/2,-\sqrt{3}/2)\}$ y conjuntos $C_{u,r_n}$, para $u\in\mathcal{W}^{\mathcal{F}}$.}}
%\label{fig:FR2}
%\end{center}
%\end{figure}

\begin{figure}[htb]
\begin{center}
\setlength{\unitlength}{1mm}
\begin{picture}(120,50)
\psline[linewidth=.5pt]{<->}(0.2,2)(4.8,2)
\psline[linewidth=.5pt]{<->}(2.5,0)(2.5,4.5)
\psline[linewidth=.5pt]{-}(2.5,2)(3.79,1.25)
\psline[linewidth=.5pt]{-}(2.5,2)(1.21,1.25)
\pscustom[linewidth=.5pt,hatchwidth=.2pt,fillstyle=hlines]{
\psline[linewidth=.5pt]{-}(2.5,2)(2.5,.5)
\psarc(2.5,2){1.5}{270}{330}
\psline[linewidth=.5pt]{-}(3.78,1.25)(2.5,2)
}
\pscustom[linewidth=.5pt,hatchwidth=.2pt,fillstyle=vlines]{
\psline[linewidth=.5pt]{-}(2.5,2)(1.22,1.25)
\psarc(2.5,2){1.5}{210}{270}
\psline[linewidth=.5pt]{-}(2.5,.5)(2.5,2)
}
\psline[linewidth=1pt]{->}(2.5,2)(3.23,0.71)
\psline[linewidth=1pt]{->}(2.5,2)(1.77,0.71)
\psline[linewidth=.5pt]{-}(2.5,2)(2.5,3.5)
\put(34,4){{{\scriptsize{$\left(\frac{1}{2},-\frac{\sqrt{3}}{2}\right)$}}}}
\put(0,4){{{\scriptsize{$\left(-\frac{1}{2},-\frac{\sqrt{3}}{2}\right)$}}}}

%%%%%%%%%%%%%%%
\psline[linewidth=.5pt]{<->}(6.2,2)(10.8,2)
\psline[linewidth=.5pt]{<->}(8.5,0)(8.5,4.5)
\psline[linewidth=.5pt]{-}(8.5,2)(9.79,1.25)
\psline[linewidth=.5pt]{-}(8.5,2)(7.21,1.25)
\pscustom[linewidth=.5pt,fillstyle=solid,fillcolor=gray]{
\psline[linewidth=.5pt]{-}(8.5,2)(8.5,1.3)
\psline[linewidth=.5pt]{-}(8.5,1.3)(9.71,1.3)
\psline[linewidth=.5pt]{-}(9.71,1.3)(9.71,2)
\psline[linewidth=.5pt]{-}(9.71,2)(8.5,2)
}
\pscustom[linewidth=.5pt,hatchwidth=.2pt,fillstyle=hlines]{
\psline[linewidth=.5pt]{-}(8.5,2)(8.5,.5)
\psarc(8.5,2){1.5}{270}{330}
\psline[linewidth=.5pt]{-}(9.78,1.25)(8.5,2)
}
\pscustom[linewidth=.5pt,hatchwidth=.2pt,fillstyle=vlines]{
\psline[linewidth=.5pt]{-}(8.5,2)(7.22,1.25)
\psarc(8.5,2){1.5}{210}{270}
\psline[linewidth=.5pt]{-}(8.5,.5)(8.5,2)
}
\psline[linewidth=1pt]{->}(8.5,2)(9.23,0.71)
\psline[linewidth=1pt]{->}(8.5,2)(7.77,0.71)
\psline[linewidth=.5pt]{-}(8.5,2)(8.5,3.5)
\pscircle[linewidth=.5pt](8.5,2.7){1.4}
\put(89,42){{{\scriptsize{$B((\alpha-\rho)e_2,\alpha)$}}}}
\put(94,4){{{\scriptsize{$\left(\frac{1}{2},-\frac{\sqrt{3}}{2}\right)$}}}}
\put(60,4){{{\scriptsize{$\left(-\frac{1}{2},-\frac{\sqrt{3}}{2}\right)$}}}}
\put(99,15){{{\scriptsize{$\rho$}}}}
\put(89,21){{{\scriptsize{$\sqrt{3}\rho$}}}}
\put(23.5,50){{{\scriptsize{(a)}}}}
\put(83.5,50){{{\scriptsize{(b)}}}}

\end{picture}
\caption{\textit{(a) $\mathcal{W}^{\mathcal{F}}=\{(-1/2,-\sqrt{3}/2),(1/2,-\sqrt{3}/2)\}$ and $C_{u,r_n}$, for $u\in\mathcal{W}^{\mathcal{F}}$. (b) For $u\in\mathcal{W}^{\mathcal{F}}$, $C_{u,r_n}\cap B((\alpha-\rho)e_2,\alpha)$ is contained in the rectangle of height $\rho$ and base $\sqrt{3}\rho$.}}
\label{fig:FR2}
\end{center}
\end{figure}
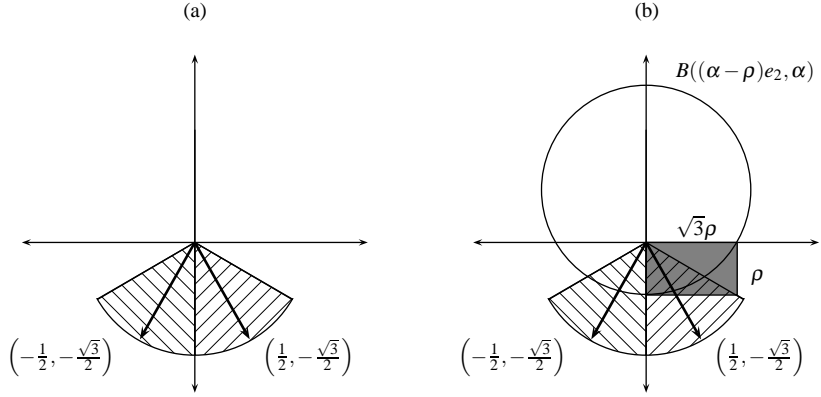

If we repeat the sketch of the proof for $\mathcal{G}_{r_n}$ and define $U$ to be the circular sectors $C_{u,r_n}$ for $u\in\mathcal{W}^{\mathcal{F}}$, we could
no longer guarantee the lower bound in (\ref{bufR2}). Note that the intersection $C_{u,r_n}\cap B((\alpha-\rho)e_2,\alpha)$ for $u\in\mathcal{W}^{\mathcal{F}}$ is considerably
smaller than for $u\in\mathcal{W}^{\mathcal{G}}$. In fact, it can be easily proved that, for $u\in\mathcal{W}^{\mathcal{F}}$,
$\mu(C_{u,r_n}\cap B((\alpha-\rho)e_2,\alpha))\leq \sqrt{3}\rho^2,$
as it is shown in Figure \ref{fig:FR2} (b). Therefore, we need to consider different sets $U$. We have previously discussed the possibility of defining unavoidable sets, larger than circular sectors. For a fixed unit vector $u$,
\begin{equation}\label{uu}
U=\bigcap_{y\in C_{u,r_n}}B(y,r_n)
\end{equation}
is the largest set such that $U\subset B(y,r_n)$ for all $y\in C_{u,r_n}$.
\begin{figure}[h]
\begin{center}
\setlength{\unitlength}{1mm}
\begin{picture}(120,60)
\psline[linewidth=.5pt]{<->}(0,2.5)(5,2.5)
\psline[linewidth=.5pt]{<->}(2.5,0)(2.5,5.5)
\pscircle[linewidth=.5pt](2.5,3.5){1.5}
\psline[linewidth=1pt]{->}(2.5,2.5)(3.23,1.21)
\pscustom[linewidth=.5pt,hatchwidth=.2pt,fillstyle=hlines]{
\psline[linewidth=.5pt]{-}(2.5,2.5)(2.5,1)
\psarc(2.5,2.5){1.5}{270}{330}
\psline[linewidth=.5pt]{-}(3.78,1.75)(2.5,2.5)
}

\psline[linewidth=.5pt]{<->}(6,2.5)(11,2.5)
\psline[linewidth=.5pt]{<->}(8.5,0)(8.5,5.5)
\pscircle[linewidth=.5pt](8.5,3.5){1.5}
\pscustom[linewidth=.5pt,hatchwidth=.2pt,fillstyle=hlines]{
\psarc(8.5,1){1.5}{30}{90}
\psarc(9.79,1.75){1.5}{150}{210}
\psarc(8.5,2.5){1.5}{270}{330}
}
\psline[linewidth=1pt]{->}(8.5,2.5)(9.23,1.21)
\pscustom[linewidth=.5pt]{
\psline[linewidth=.5pt]{-}(8.5,2.5)(8.5,1)
\psarc(8.5,2.5){1.5}{270}{330}
\psline[linewidth=.5pt]{-}(9.78,1.75)(8.5,2.5)
}
\put(29,52){{{\scriptsize{$B((\alpha-\rho)e_2,\alpha)$}}}}
\put(89,52){{{\scriptsize{$B((\alpha-\rho)e_2,\alpha)$}}}}

\put(33,8){{{\scriptsize{$u=\left(\frac{1}{2},-\frac{\sqrt{3}}{2}\right)$}}}}
\put(94,8){{{\scriptsize{$u=\left(\frac{1}{2},-\frac{\sqrt{3}}{2}\right)$}}}}

\put(23.5,57){{{\scriptsize{(a)}}}}
\put(83.5,57){{{\scriptsize{(b)}}}}
\end{picture}
\caption{\textit{(a) $C_{u,r_n}$ with $u=(1/2,-\sqrt{3}/2)$. (b) $\bigcap_{y\in C_{u,r_n}}B(y,r_n)$.}}
\label{fig:compara}
\end{center}
\end{figure}
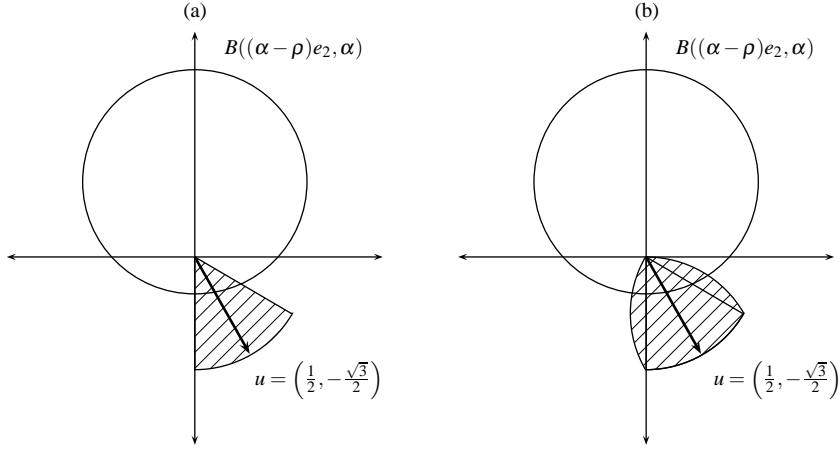
Figure \ref{fig:compara} shows $C_{u,r_n}$, for an $u\in\mathcal{W}^\mathcal{F}$ and the corresponding set $U$ defined in (\ref{uu}). Observe that $U\cap B((\alpha-\rho)e_2,\alpha)$ is clearly larger than $C_{u,r_n}\cap B((\alpha-\rho)e_2,\alpha)$. The difference between both intersections will play a fundamental role in obtaining the lower bound in (\ref{bufR2}). In fact, it is not necessary to consider the whole $U$ as defined in (\ref{uu}). For our purposes it is sufficient to measure a portion of $U\cap B((\alpha-\rho)e_2,\alpha)$. We shall consider sets as the one represented in gray in Figure \ref{midoesto}. Its measure is large enough to satisfy (\ref{bufR2}). We give the precise definition of this kind of sets in Proposition \ref{propositionmalos2}. This solves the problem for the points in $\mathcal{F}_{r_n}$.

\begin{figure}
\begin{center}
\setlength{\unitlength}{1mm}
\begin{picture}(50,60)
\pscustom[linewidth=.5pt,fillstyle=solid,fillcolor=gray]{
\psline(2.5,2.5)(2.5,2.25)(3.3,2.25)
\psarc(2.5,1){1.5}{58}{90}
}
\pscustom[linewidth=.5pt,hatchwidth=.2pt,fillstyle=hlines]{
\psarc(2.5,1){1.5}{30}{90}
\psarc(3.79,1.75){1.5}{150}{210}
\psarc(2.5,2.5){1.5}{270}{330}
}
\psline[linewidth=1pt]{->}(2.5,2.5)(3.23,1.21)
\psline[linewidth=.5pt]{<->}(0,2.5)(5,2.5)
\psline[linewidth=.5pt]{<->}(2.5,-0.25)(2.5,5.5)
\pscircle[linewidth=.5pt](2.5,3.5){1.5}
\put(33,8){{{\scriptsize{$u=\left(\frac{1}{2},-\frac{\sqrt{3}}{2}\right)$}}}}
\end{picture}
\caption{\textit{The dashed area corresponds to $U=\bigcap_{y\in C_{u,r_n}}B(y,r_n)$ with $u=(1/2,-\sqrt{3}/2)$.}}
\label{midoesto}
\end{center}
\end{figure}
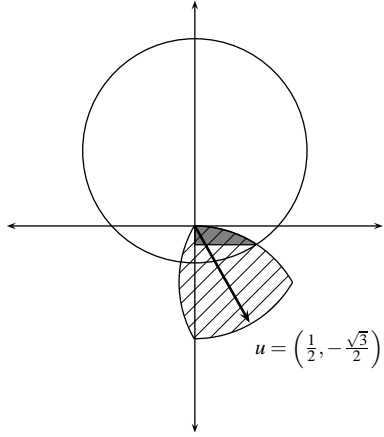

\begin{proposition}\label{propositionmalos2}
There exists a finite family of sets $\mathcal{U}^{\mathcal{F}}_{0,r_n}$ with $m^{\mathcal{F}}=2$ elements such that, for all $y\in\mathcal{F}_{r_n}$, there exists $U\in\mathcal{U}^{\mathcal{F}}_{0,r_n}$ such that {\mbox{$U\subset B(y,r_n)$}} and
\begin{equation*}
\mu(U\cap B((\alpha-\rho)e_2,\alpha))\geq L^\mathcal{F} r_n^{\frac{1}{2}}\rho^{\frac{3}{2}},
\end{equation*}
with $L^\mathcal{F}>0$ a constant.
\end{proposition}

\begin{proof}
First, let us consider the set
$B((\alpha-\rho)e_2,\alpha)\cap{B}(-r_ne_2,r_n)$,
which corresponds to the intersection between two balls of radii $\alpha$ and $r_n$, respectively, being  $\alpha+r_n-\rho$ the distance between their centres, see Figure \ref{fig:Ly2} (a).
The values of $h_1$, $h_2$ and $\lambda$ in Figure \ref{fig:Ly2} (b) can be deduced from the Pythagorean theorem. They satisfy the following equations
\begin{equation*}
\left\{\begin{array}{l}
(r_n-h_1)^2+\lambda^2=r_n^2,\\
(\alpha-h_2)^2+\lambda^2=\alpha^2,\\
h_1+h_2=\rho.
\end{array}\right.
\end{equation*}
By solving the system,
\begin{equation*}%\label{h1h2}
h_1=\frac{\rho(2\alpha-\rho)}{2(\alpha+r_n-\rho)},\ \ \ h_2=\frac{\rho(2r_n-\rho)}{2(\alpha+r_n-\rho)}\ \ \textnormal{ , and }\ \ \lambda=\sqrt{2r_nh_1-h_1^2}.
\end{equation*}
We now define the set
\begin{equation}
\label{DefCh1}
\mathcal{C}(h_1)=\{x\in\mathbb{R}^2:\ -h_1\leq\left\langle x,e_2\right\rangle\leq 0\}\cap B(-r_ne_2,r_n).
\end{equation}
Lemma \ref{Ch1} provides a lower bound for the measure of $\mathcal{C}(h_1)$.
\begin{figure}[h]
\begin{center}
\setlength{\unitlength}{1mm}
\begin{picture}(140,60)
\psarc[linewidth=.5pt,fillstyle=vlines*,fillcolor=gray](2.5,1.3){1.2}{50}{130}
\psarc[linewidth=.5pt,fillstyle=vlines](2.5,3.5){1.5}{239}{301}
\psline[linewidth=.5pt]{<->}(0,2.5)(5,2.5)
\psline[linewidth=.5pt]{<->}(2.5,-0.25)(2.5,5.5)
\pscircle[linewidth=.5pt](2.5,1.3){1.2}
\pscircle[linewidth=.5pt](2.5,3.5){1.5}
\psline[linewidth=.5pt]{-}(2.5,3.5)(1.13,4.05)
\psline[linewidth=.5pt]{-}(2.5,1.3)(1.37,0.98)
%\psline[linewidth=.5pt,linestyle=dashed]{-}(0,1.98)(5,1.98)
%\psline[linewidth=.5pt,linestyle=dashed]{-}(0,1.7)(5,1.7)
\put(25.5,22.5){{{\scriptsize{$0$}}}}
\put(20,38){{{\scriptsize{$\alpha$}}}}
\put(21,10){{{\scriptsize{$r_n$}}}}
\put(23.5,57){{{\scriptsize{(a)}}}}
\psline[linewidth=.5pt]{<->}(6,2.5)(11,2.5)
\psline[linewidth=.5pt]{<->}(8.5,-0.25)(8.5,5.5)
\pscircle[linewidth=.5pt](8.5,1.3){1.2}
\pscircle[linewidth=.5pt](8.5,3.5){1.5}
\psline[linewidth=1pt]{-}(8.5,3.5)(7.75,2.23)
\psline[linewidth=1pt]{-}(8.5,1.3)(7.75,2.23)
\psline[linewidth=1pt]{-}(7.75,2.23)(8.5,2.23)
\psline[linewidth=1pt]{-}(8.5,1.3)(8.5,3.5)
\psline[linewidth=.5pt,linestyle=dashed]{-}(6,2.23)(11,2.23)
\psline[linewidth=.5pt,linestyle=dashed]{-}(6,2)(11,2)
\put(85.5,22.5){{{\scriptsize{$0$}}}}
\put(78,27.5){{{\scriptsize{$\alpha$}}}}
\put(79,15){{{\scriptsize{$r_n$}}}}
\put(83.5,57){{{\scriptsize{(b)}}}}
\put(82,22.55){{{\scriptsize{$\lambda$}}}}
\psline[linewidth=.5pt]{|-|}(11.2,2)(11.2,2.23)
\psline[linewidth=.5pt]{-|}(11.2,2.23)(11.2,2.5)
\put(113,19.5){{{\tiny{$h_2=\frac{\rho(2r_n-\rho)}{2(\alpha+r_n-\rho)}$}}}}
\put(113,24){{{\tiny{$h_1=\frac{\rho(2\alpha-\rho)}{2(\alpha+r_n-\rho)}$}}}}
\psline[linewidth=.5pt]{|-|}(13.5,2)(13.5,2.5)
\put(136,22){{{\scriptsize{$\rho$}}}}
\end{picture}
\caption{\textit{(a) The dashed area corresponds to $B((\alpha-\rho)e_2,\alpha)\cap{B}(-r_ne_2,r_n)$. In gray $\mathcal{C}(h_1)$.  (b) Values of $h_1$, $h_2$ and $\lambda$.}}
\label{fig:Ly2}
\end{center}
\end{figure}
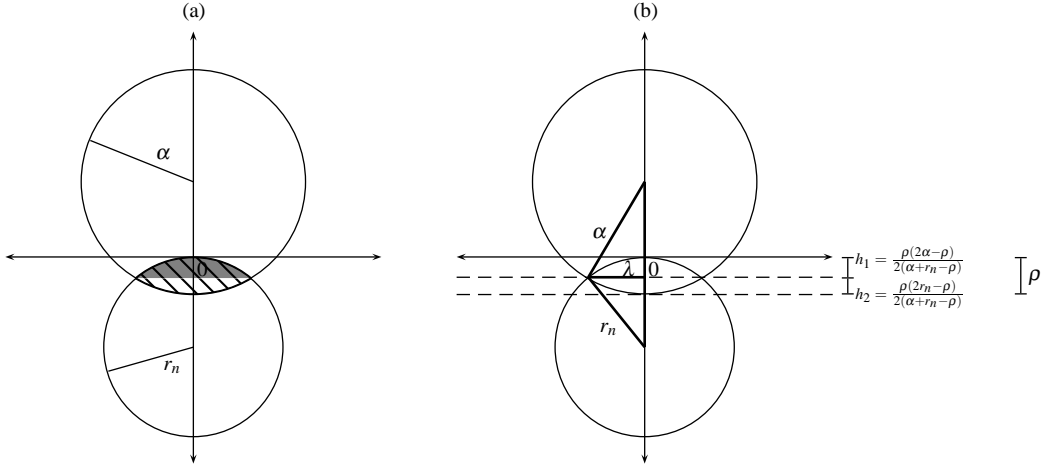
\begin{lemma}\label{Ch1}
Given the previous set $\mathcal{C}(h_1)$, then
\[\mu(\mathcal{C}(h_1))\geq\frac{\sqrt{2}}{3}r_n^{\frac{1}{2}}\rho^{\frac{3}{2}}.\]
\end{lemma}

\begin{proof}
We have that
\begin{equation}\label{integral}
\mu(\mathcal{C}(h_1))=\int_{0}^{h_1}{2\sqrt{2r_ny-y^2}dy}.
\end{equation}
For $y\in[0,h_1]$ we have that $y\leq r_n$, since by construction $h_1\leq\rho$ and by assumption $\rho\leq r_n/2$. Hence, $2r_ny-y^2\geq r_ny$ and
\[\mu(\mathcal{C}(h_1))\geq\int_{0}^{h_1}{2\sqrt{r_ny}dy}=\frac{4}{3}r_n^{\frac{1}{2}}h_1^{\frac{3}{2}}.\]
Moreover, $h_1\geq\rho/2$, since $r_n\leq \alpha$ and this completes the proof. \qed 

\end{proof}

\begin{remark}
Note that the exact value of the integral in (\ref{integral}) can be explicitly computed since it coincides with the area of the circular segment defined by the chord that joins the intersection points of $B((\alpha-\rho)e_2,\alpha)\cap{B}(-r_ne_2,r_n)$. Thus,
\[\mu(\mathcal{C}(h_1))=r_n^2\arccos\left(\frac{r_n-h_1}{r_n}\right)-(r_n-h_1)\sqrt{2r_nh_1-h_1^2}.\]
\end{remark}

 So, we have defined the set $\mathcal{C}(h_1)$, whose measure verifies the statement of Proposition \ref{propositionmalos2}. Next lemma shows that $\mathcal{C}(h_1)$ is contained in  $B((\alpha-\rho)e_2,\alpha)$.

\begin{lemma}\label{Ch1contR2}
\[\mathcal{C}(h_1)\subset B((\alpha-\rho)e_2,\alpha).\]
\end{lemma}
\begin{proof}
Let $x\in\mathcal{C}(h_1)$.
\begin{eqnarray}\label{aux}
\left\|x-(\alpha-\rho)e_2\right\|^2&=&\left\|x\right\|^2+(\alpha-\rho)^2-2(\alpha-\rho)\left\langle x,e_2\right\rangle.
\end{eqnarray}
By definition, $x\in B(-r_ne_2,r_n)$ and therefore %$$\left\|x+r_ne_2\right\|^2=\left\|x\right\|^2+r_n^2+2r_n\left\langle x,e_2\right\rangle\leq r_n^2,$$
$\left\|x\right\|^2\leq -2r_n\left\langle x,e_2\right\rangle.$
Furthermore, by definition, $\left\langle x,e_2\right\rangle\geq -h_1$. Turning to (\ref{aux}) we get
\begin{eqnarray*}
\left\|x-(\alpha-\rho)e_2\right\|^2&\leq&2r_nh_1+(\alpha-\rho)^2+2(\alpha-\rho)h_1\\
%&=&2(r_n+\alpha-\rho)h_1+(\alpha-\rho)^2\\
&=&\rho(2\alpha-\rho)+(\alpha-\rho)^2=\alpha^2.
\end{eqnarray*} \qed

\end{proof}

It follows from Lemmas \ref{Ch1} and \ref{Ch1contR2} that
\begin{equation}\label{Csirve}
\mu(\mathcal{C}(h_1)\cap B((\alpha-\rho)e_2,\alpha))\geq L r_n^{\frac{1}{2}}\rho^{\frac{3}{2}}.
\end{equation}
%The proof is complete if we determine the sets that form
In order to complete the proof, it remains to define the family $\mathcal{U}^\mathcal{F}_{0,r_n}$ mentioned in the statement
of Proposition \ref{propositionmalos2}. In view of (\ref{Csirve}), it seems natural to divide $\mathcal{C}(h_1)$.
%Let us first consider the following partition of $\mathbb{R}^2$.
%\[\mathbb{R}^2=\{x=(x_1,x_2) \in\mathbb{R}^2:\ x_1\geq 0\}\cup\{x=(x_1,x_2) \in\mathbb{R}^2:\ x_1\leq 0\}.\]
We denote $Q_1=\{x=(x_1,x_2) \in\mathbb{R}^2:\ x_1\geq 0\}$ and $Q_2=\{x=(x_1,x_2) \in\mathbb{R}^2:\ x_1\leq 0\}$. Then,
$\mathcal{F}_{r_n}=(Q_1\cap\mathcal{F}_{r_n})\cup (Q_{2}\cap\mathcal{F}_{r_n})$
and, in the same manner, $\mathcal{C}(h_1)=(Q_1\cap\mathcal{C}(h_1))\cup (Q_{2}\cap\mathcal{C}(h_1)).$

\begin{lemma}\label{contenido2} For all $y\in Q_i\cap\mathcal{F}_{r_n}$ we have that
\[Q_i\cap\mathcal{C}(h_1)\subset B(y,r_n), \ \ i=1,2.\]
%for $i=1,2.$
\end{lemma}
\begin{proof}
Let $x\in Q_1\cap\mathcal{C}(h_1)$. First, it can be easily proved that
$Q_1\cap\mathcal{F}_{r_n}=C_{u,r_n},$
with $u=(1/2,-\sqrt{3}/2) $. What we need to prove is
$x\in \bigcap_{y\in C_{u,r_n}}B(y,r_n).$
It follows from Lemma \ref{triangulo} that
\[\bigcap_{y\in C_{u,r_n}}B(y,r_n)=B(0,r_n)\cap B(v_1,r_n)\cap B(v_2,r_n),\]
where $v_1=r_n\mathcal{R}(u)=r_n\left(\sqrt{3}/2, -1/2\right)$ and $v_2=r_n\mathcal{R}^{-1}(u)=-r_ne_2$. We have by definition that $x\in B(v_2,r_n)$. Moreover,
$\left\|x\right\|^2\leq\lambda^2+h_1^2= 2r_nh_1\leq r_n^2,$
since $h_1\leq\rho\leq r_n/2$. Note that the last inequality justifies the choice of $\rho\leq r_n/2$. And,
\[\left\|x-v_1\right\|^2=\left(x_1-\frac{\sqrt{3}r_n}{2}\right)^2+\left(x_2+\frac{r_n}{2}\right)^2\leq \left(\frac{\sqrt{3}r_n}{2}\right)^2+\left(\frac{r_n}{2}\right)^2= r_n^2,\]
since $0\leq x_1\leq\lambda\leq {\sqrt{3}r_n}/2$ and $-h_1\leq x_2\leq 0$, where $h_1\leq \rho\leq r_n/2$. Thus, we have shown that
$x\in B(0,r_n)\cap B(v_1,r_n)\cap B(v_2,r_n)$
and the lemma is proved for $Q_1\cap\mathcal{C}(h_1)$.
The proof for $Q_2\cap\mathcal{C}(h_1)$ is analogous. \qed % that $Q_2\cap\mathcal{F}_{r_n}=C_{u,r_n}$, for $u=(-1/2,-\sqrt{3}/2)$.

\end{proof}

In view of the previous results we define the family
$\mathcal{U}_{0,r_n}^\mathcal{F}=\{Q_i\cap\mathcal{C}(h_1),\ i=1,2\},$
formed by $m^\mathcal{F}=2$ elements. It follows from Lemma \ref{contenido2} that, for all $y\in\mathcal{F}_{r_n}$, there exists $i\in\{1,2\}$ such that $Q_i\cap\mathcal{C}(h_1)\subset B(y,r_n)$. Moreover, by Lemma \ref{Ch1},
\[L r_n^{\frac{1}{2}}\rho^{\frac{3}{2}}\leq \mu(\mathcal{C}(h_1))=\sum_{i=1}^2\mu(Q_i\cap\mathcal{C}(h_1)).\]
The symmetry of the set $\mathcal{C}(h_1)$ with respect to the axis $OY$ implies that the orthogonal transformation $\mathcal{O}:\mathbb{R}^2\longrightarrow\mathbb{R}^2$ such that $\mathcal{O}(x)=\mathcal{O}(x_1,x_2)=(-x_1,x_2)$ transforms $Q_1\cap\mathcal{C}(h_1)$ into $Q_2\cap\mathcal{C}(h_1)$ and then both sets measure the same, that is,
\[\mu(Q_1\cap\mathcal{C}(h_1))=\mu(Q_2\cap\mathcal{C}(h_1))=\frac{1}{2}\mu(\mathcal{C}(h_1)).\]
By Lemma \ref{Ch1contR2} we further have that, for $i=1,2$, $Q_i\cap\mathcal{C}(h_1)\subset \mathcal{C}(h_1)\subset B((\alpha-\rho)e_2,\alpha)$
and hence $\mu(Q_i\cap\mathcal{C}(h_1)\cap B((\alpha-\rho)e_2,\alpha))=\mu(Q_i\cap\mathcal{C}(h_1))\geq L^\mathcal{F} r_n^{\frac{1}{2}}\rho^{\frac{3}{2}},$
where $L^\mathcal{F}=\sqrt{2}/6$. This completes the proof of Proposition \ref{propositionmalos2}. \qed 

\end{proof}

Now, we define $\mathcal{U}_{0,r_n}=\mathcal{U}^{\mathcal{G}}_{0,r_n}\cup\mathcal{U}^{\mathcal{F}}_{0,r_n}$. As we mentioned at the beginning of Proposition \ref{cerca.R2}, $\mathcal{U}_{x,r_n}=\{T(U), \ U\in \mathcal{U}_{0,r_n}\}$
is a finite family with $m_2=m^\mathcal{G}+m^\mathcal{F}=6$ elements satisfying that, for each $U\in \mathcal{U}_{0,r_n}$,
\[P_X(T(U))\geq \delta\mu(T(U)\cap B(P_{\Gamma}x-\alpha\eta,\alpha))=\delta\mu(U\cap B((\alpha-\rho)e_2,\alpha))\geq L_2r_n^{\frac{1}{2}}\rho^{\frac{3}{2}},\]
where $L_2=\delta\min(L^\mathcal{G},L^\mathcal{F})$. This completes the proof of Proposition \ref{cerca.R2}. \qed 

\end{proof}

%%%Cálculo de la tasa
We are know in position to complete the proof of Theorem \ref{ordenR2}. Recall that, if we define for each $x\in S$ a family {\mbox{$\mathcal{U}_{x,r_n}$}} unavoidable
and finite for $\mathcal{E}_{x,r_n}$, then
$$
\mathbb{E}(d_\mu(S,S_n))\leq\int_S{\sum_{U\in\mathcal{U}_{x,r_n}}\exp(-nP_X(U))\mu(dx)}.
$$
We divide $S$ into two subsets $S=\left\{x\in S:\ \ d(x,\partial S)>\frac{r_n}{2}\right\}\cup\left\{x\in S:\ \ d(x,\partial S)\leq \frac{r_n}{2}\right\}$
and then
\begin{eqnarray}
\mathbb{E}(d_\mu(S,S_n))&\leq& \int_{\left\{x\in S:\ \ d(x,\partial S)>\frac{r_n}{2}\right\}}{\sum_{U\in\mathcal{U}_{x,r_n}}\exp(-nP_X(U))\mu(dx)}\nonumber\\
&+&\int_{\left\{x\in S:\ \ d(x,\partial S)\leq\frac{r_n}{2}\right\}}{\sum_{U\in\mathcal{U}_{x,r_n}}\exp(-nP_X(U))\mu(dx)}.\label{buf4d}
\end{eqnarray}
For those $x\in S$ such that $d(x,\partial S)> r_n/2$ we make use of the families $\mathcal{U}_{x,r_n}$ given in Proposition \ref{lejos.R2} which ensures
the existence of suitable finite families $\mathcal{U}_{x,r_n}$ and provides a lower bound on the probability of the sets $U$, independent of $x$. Thus,
\begin{gather}
\int_{\left\{x\in S:\ \ d(x,\partial S)>\frac{r_n}{2}\right\}}{\sum_{U\in\mathcal{U}_{x,r_n}}\exp(-nP_X(U))\mu(dx)}\nonumber\\
\leq\int_{\left\{x\in S:\ \ d(x,\partial S)>\frac{r_n}{2}\right\}}{m_1\exp(-nL_1r_n^2)\mu(dx)}
=O\left(\textnormal{e}^{-L_1nr_n^2}\right).\label{O1d}
\end{gather}
For those $x\in S$ such that $d(x,\partial S)\leq r_n/2$, we may consider the unavoidable families $\mathcal{U}_{x,r_n}$ given in Proposition \ref{cerca.R2}. We have that
\begin{gather*}
\int_{\left\{x\in S:\ \ d(x,\partial S)\leq\frac{r_n}{2}\right\}}{\sum_{U\in\mathcal{U}_{x,r_n}}\exp(-nP_X(U))\mu(dx)}\nonumber\\
\leq\int_{\left\{x\in S:\ \ d(x,\partial S)\leq\frac{r_n}{2}\right\}}{m_2\exp\left(-L_2nr_n^{\frac{1}{2}}d(x,\partial S)^{\frac{3}{2}}\right)\mu(dx)}\nonumber\\
=\int_{\mathcal{T}^{-1}([0,r_n/2])}{g(\mathcal{T}(x))\mu(dx)},\label{uyyd}
\end{gather*}
where $\mathcal{T}:S\rightarrow\mathbb{R}$ is defined as $\mathcal{T}(x)=d(x,\partial S)$ and $g(z)=m_2\exp(-L_2nr_n^{\frac{1}{2}}z^{\frac{3}{2}})$. It follows from the change of variables formula (see Theorem 16.12 of \cite{Billingsley}) that
\begin{equation}\label{bill.cvt}
\int_{\mathcal{T}^{-1}([0,r_n/2])}{g(\mathcal{T}(x))\mu(dx)}=\int_{[0,r_n/2]}{g(\rho)\mu\mathcal{T}^{-1}(d\rho)}
\end{equation}
%\begin{equation*}
%\int_{\left\{x\in S:\ \ d(x,\partial S)\leq\frac{r_n}{2}\right\}}{6\exp\{-L_2nr_n^{\frac{1}{2}}\rho^{\frac{3}{2}}\}\mu(dx)}
%=\int_{[0,r_n/2]}{6\exp\{-L_2nr_n^{\frac{1}{2}}\rho^{\frac{3}{2}}\}\mu T^{-1}(d\rho)},
%\end{equation*}
where $\rho=\mathcal{T}(x)$ and $\mu \mathcal{T}^{-1}$ is the measure on $\mathbb{R}$ defined by
$\mu \mathcal{T}^{-1}(A)=\mu(\mathcal{T}^{-1}(A)),$
for $\ A\subset \mathbb{R}.$ The measure $\mu \mathcal{T}^{-1}$ is characterized by
$F(z)=\mu\{x\in S:\ d(x,\partial S)\leq z\}.$
Since ${\textnormal{reach}}(\partial S)\geq \alpha$,  $F(z)$ is a polynomial of degree at most $2$ in $z$, $0\leq z<\alpha$, see \cite{Federer59}. Therefore, it is a differentiable function and $F^\prime(z)$ is bounded on compact sets. In short, we obtain
\begin{eqnarray*}
\int_{[0,r_n/2]}{g(\rho)\mu\mathcal{T}^{-1}(d\rho)}
%=\int_{[0,r_n/2]}{m_2\exp\left(-L_2nr_n^{\frac{d-1}{2}}\rho^{\frac{d+1}{2}}\right)\mu \mathcal{T}^{-1}(d\rho)}\nonumber\\
&=&\int_{[0,r_n/2]}{m_2\exp\left(-L_2nr_n^{\frac{1}{2}}\rho^{\frac{3}{2}}\right)F^\prime(\rho)d\rho}\\
&\leq& K \int_{0}^{\frac{r_n}{2}}{m_2\exp\left(-L_2nr_n^{\frac{1}{2}}\rho^{\frac{3}{2}}\right)d\rho}\\
&=&K\int_{0}^{\frac{L_2n}{2^{3/2}}r_n^2}{m_2\frac{1}{\frac{3}{2}L_2^{2/3}}}r_n^{-\frac{1}{3}}n^{-\frac{2}{3}}\textnormal{e}^{-v}v^{-\frac{1}{3}}dv
=O\left(r_n^{-\frac{1}{3}}n^{-\frac{2}{3}}\right),\label{O2d}
\end{eqnarray*}
where we have used the change of variables formula $v=L_2nr_n^{\frac{1}{2}}\rho^{\frac{3}{2}}$ and also the fact
that $\int_{0}^\infty{\textnormal{e}^{-v}v^{-\frac{1}{3}}dv}<\infty$. Turning to the
computation of $\mathbb{E}(d_\mu(S,S_n))$ in (\ref{buf4d}), it follows from (\ref{O1d}) and (\ref{O2d}) that
\begin{equation}\label{larga}
\mathbb{E}(d_\mu(S,S_n))=O\left(\textnormal{e}^{-L_1nr_n^2}+r_n^{-\frac{1}{3}}n^{-\frac{2}{3}}\right).
\end{equation}
Since $r_n$ is bounded by $\alpha$ and $n r_n^2/\log n$ goes to infinity, we have $\textnormal{e}^{-L_1nr_n^2}=o(r_n^{-1/3}n^{-2/3})$. Therefore,
%Now if (\ref{conv.rn}) holds, then for all $M>0$ there exists $N\in\mathbb{N}$ such that $nr_n^2\geq M\log n,$
%for all $n\geq N$ and hence $\textnormal{e}^{-L_1nr_n^2}\leq \textnormal{e}^{-L_1M\log n}=n^{-L_1M}.$
%As a consequence
%\begin{equation}\label{tasa}
%\limsup_{n\rightarrow\infty}\frac{\textnormal{e}^{-L_1nr_n^d}}{r_n^{-\frac{d-1}{d+1}}n^{-\frac{2}{d+1}}}\leq \limsup_{n\rightarrow\infty}\frac{n^{-L_1M}}{r_n^{-\frac{d-1}{d+1}}n^{-\frac{2}{d+1}}}=\limsup_{n\rightarrow\infty}r_n^{\frac{d-1}{d+1}}n^{\left(\frac{2}{d+1}-L_1M\right)}=0,
%\end{equation}
%for large enough $M$. Remember that $r_n$ is bounded ($r_n\leq\alpha$ by assumption). We now combine (\ref{larga}) and (\ref{tasa}) to obtain
$\mathbb{E}(d_\mu(S,S_n))=O(r_n^{-1/3}n^{-2/3})$,
which completes the proof of Theorem \ref{ordenR2}. \qed 
\end{proof}

\begin{proof}[Theorem \ref{nomejor}]
Let $S=B(0,\alpha)$ and assume that the distribution $P_X$ is uniform on $S$.  Our aim is to find a lower bound for $\mathbb{E}(d_\mu(S,S_n))$. Thus,
\begin{eqnarray*}
\mathbb{E}(d_\mu(S,S_n))&=&\int_S{P(\exists y\in {B}(x,r_n): {B}(y,r_n)\cap\mathcal{X}_n=\emptyset)\mu(dx)}\\
&\geq&\int_{\left\{x\in S:\ \ d(x,\partial S)\leq\frac{r_n}{2}\right\}}{P(\exists y\in {B}(x,r_n): {B}(y,r_n)\cap\mathcal{X}_n=\emptyset)\mu(dx)}.
\end{eqnarray*}
For each $x\in S$ such that $d(x,\partial S)\leq r_n/{2}$ let $\eta=x/\left\|x\right\|$ and $\tilde{x}=(\left\|x\right\|+r_n)\eta$, see Figure \ref{fig2:apuf}. A simple geometric argument shows
that $P_X(B(\tilde{x},r_n))\leq 1/2$. %%El argumento está ahí abajo comentado
Since $\tilde{x}\in B(x,r_n)$ we have
%\begin{equation}\label{eq22}
%\mathbb{E}(d_\mu(S,S_n))\geq\int_{\left\{x\in S:\ \ d(x,\partial S)\leq\frac{r_n}{2}\right\}}{(1-P_X(B(\tilde{x},r_n)))^n\mu(dx)}.
%\end{equation}
%\begin{equation}\label{ptoz}

%\end{equation}
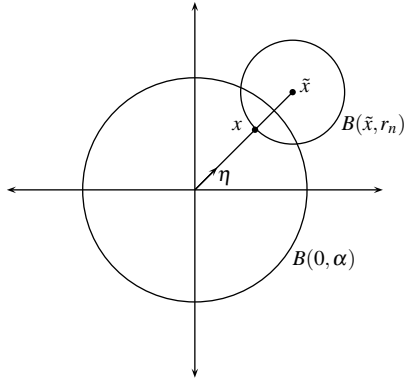
\begin{figure}[h]
\begin{center}
\setlength{\unitlength}{1mm}
          \begin{picture}(50,50)
              \psline[linewidth=.5pt]{<->}(0,2.5)(5,2.5)
         \psline[linewidth=.5pt]{<->}(2.5,0)(2.5,5)

\pscircle[linewidth=.5pt](2.5,2.5){1.5}
\pscircle[linewidth=.5pt](3.8,3.8){0.7}
\psdot[dotsize=2pt .5](3.3,3.3)
\psline[linewidth=.5pt]{}(2.5,2.5)(3.8,3.8)
\psdot[dotsize=2pt .5](3.8,3.8)
\psline[linewidth=.5pt]{->}(2.5,2.5)(2.8,2.8)
\put(30,33){\scriptsize{$x$}}
\put(39,38){\scriptsize{$\tilde{x}$}}
  \put(28,26.2){\scriptsize{$\eta$}}
    \put(38,15){\scriptsize{$B(0,\alpha)$}}
    \put(44.5,33){\scriptsize{$B(\tilde{x},r_n)$}}

           \end{picture}
\caption{\textit{Given $x\in B(0,\alpha)$ such that $d(x,\partial S)\leq r_n/2$, we define $\tilde{x}$.}}
\label{fig2:apuf}
\end{center}
\end{figure}

\begin{eqnarray}
\mathbb{E}(d_\mu(S,S_n))&\geq&\int_{\left\{x\in S:\ \ d(x,\partial S)\leq\frac{r_n}{2}\right\}}{(1-P_X(B(\tilde{x},r_n)))^n\mu(dx)}\nonumber\\
&\geq&\int_{\{x\in S: d(x,\partial S)\leq r_n/2\}}\exp\left(\frac{-nP_X(B(\tilde{x},r_n))}{1-P_X(B(\tilde{x},r_n))}\right)\mu(dx)\nonumber\\
&\geq&\int_{\{x\in S: d(x,\partial S)\leq r_n/2\}}\exp\left(-2nP_X(B(\tilde{x},r_n))\right)\mu(dx).\label{eq23}
\end{eqnarray}
Above we used the fact that $(1-z)^n\geq\exp(-nz/(1-z))$ for $z\in [0,1)$. %(The case when $z=0$ is straightforward and for $z\in (0,1)$ we use that $(1-z)^n=\exp(n\ln(1-z))$ and $\ln(1-z)>-z/(1-z)$).
In view of (\ref{eq23}) we need again an upper bound for $P_X(B(\tilde{x},r_n))$. The previous bound $P_X(B(\tilde{x},r_n))\leq 1/2$ is
too rough for our purposes and obviously it can be sharpened. Let us consider the composed
function formed by first applying an
orthogonal transformation $\mathcal{O}:\mathbb{R}^2\rightarrow\mathbb{R}^2$ such that $\mathcal{O}(\eta)=-e_2$ and then
applying the translation by the vector $(\alpha-d(x,\partial S))e_2$, see Figure \ref{fig2:apuf2}. Since the Lebesgue measure is invariant
under orthogonal transformations and translations we have that
$\mu(B(\tilde{x},r_n)\cap S)=\mu(B(-r_ne_2,r_n)\cap B((\alpha-d(x,\partial S))e_2,\alpha))$. The set $B(-r_ne_2,r_n)\cap B((\alpha-d(x,\partial S))e_2,\alpha)$ is
the intersection of two balls with radius $r_n$ and $\alpha$ such that the distance between their centres is equal to $\alpha+r_n-d(x,\partial S)$. Recall that
this set appeared in Proposition \ref{propositionmalos2}. Following the notation used previously, $B(-r_ne_2,r_n)\cap B((\alpha-d(x,\partial S))e_2,\alpha)=\mathcal{C}(h_1)\cup\mathcal{A}(h_2)$,
where $\mathcal{C}(h_1)$ is given by (\ref{DefCh1}) and
\[\mathcal{A}(h_2)=\{z\in\mathbb{R}^2: -(h_1+h_2)\leq \left\langle z,e_2\right\rangle\leq -h_1\}\cap B((\alpha-d(x,\partial S))e_2,\alpha).\]
Recall that the values of $h_1$ and $h_2$ were easily deduced from the Pythagorean theorem by solving the system
\begin{equation*}
\left\{\begin{array}{l}
(r_n-h_1)^2+\lambda^2=r_n^2,\\
(\alpha-h_2)^2+\lambda^2=\alpha^2,\\
h_1+h_2=d(x,\partial S).
\end{array}\right.
\end{equation*}
Thus,
\[h_1=\frac{d(x,\partial S)(2\alpha-d(x,\partial S))}{2(\alpha+r_n-d(x,\partial S))},\ \ \ h_2=\frac{d(x,\partial S)(2r_n-d(x,\partial S))}{2(\alpha+r_n-d(x,\partial S))}.\]
\begin{figure}[h]
\begin{center}
\setlength{\unitlength}{1mm}
          \begin{picture}(135,60)
              \psline[linewidth=.5pt]{<->}(0.5,2.5)(4.5,2.5)
         \psline[linewidth=.5pt]{<->}(2.5,0)(2.5,5)

\pscircle[linewidth=.5pt](2.5,2.5){1.5}
\pscircle[linewidth=.5pt](3.8,3.8){0.7}
\psdot[dotsize=2pt .5](3.3,3.3)
\psline[linewidth=.5pt]{}(2.5,2.5)(3.8,3.8)
\psdot[dotsize=2pt .5](3.8,3.8)
\psline[linewidth=.5pt]{->}(2.5,2.5)(2.8,2.8)
\put(30,33){\scriptsize{$x$}}
\put(39,38){\scriptsize{$\tilde{x}$}}
  \put(28,26.2){\scriptsize{$\eta$}}
    \put(38,15){\scriptsize{$B(0,\alpha)$}}
    \put(28,47){\scriptsize{$B(\tilde{x},r_n)$}}

        \psline[linewidth=.5pt]{<->}(5,2.5)(9,2.5)
         \psline[linewidth=.5pt]{<->}(7,0)(7,5)

\pscircle[linewidth=.5pt](7,2.5){1.5}
\pscircle[linewidth=.5pt](7,0.66){0.66}
\psdot[dotsize=2pt .5](7,0.66)
\put(70.3,6.8){\tiny{$\mathcal{O}(\tilde{x})$}}
%\psdot[dotsize=2pt .5](7,1.32)

\psline[linewidth=.5pt]{->}(7,2.5)(7,2.1)
  \put(71,23){\scriptsize{$-e_2$}}
    \put(83,15){\scriptsize{$B(0,\alpha)$}}

\pscustom[linewidth=.05pt,fillstyle=solid, fillcolor=black]{
\psarc(11.5,3.68){1.5}{260}{280}
\psline{-}(12.,2.27)(11,2.27)
%\psarc[linewidth=.25pt](2.5,3.5){1.5}{259}{284}
}
\pscustom[linewidth=.05pt,fillstyle=solid, fillcolor=gray]{
\psline{-}(11,2.27)(12.,2.27)
\psarc(11.5,1.84){0.66}{50}{130}
%\psarc[linewidth=.25pt](2.5,3.5){1.5}{259}{284}
}
 \psline[linewidth=.5pt]{|-|}(13,2.27)(13,2.5)
 \psline[linewidth=.5pt]{-|}(13,2.27)(13,2.17)
 \put(132,21.5){\tiny{$h_2$}}
 \put(132,23.2){\tiny{$h_1$}}
 \psline[linestyle=dashed,linewidth=.25pt]{-}(11.5,2.27)(13,2.27)
 \psline[linestyle=dashed,linewidth=.25pt]{-}(11.5,2.18)(13,2.18)
       \psline[linewidth=.5pt]{<->}(9.5,2.5)(13.5,2.5)
         \psline[linewidth=.5pt]{<->}(11.5,0)(11.5,5)

\pscircle[linewidth=.5pt](11.5,3.68){1.5}
\pscircle[linewidth=.5pt](11.5,1.84){0.66}
\psdot[dotsize=2pt .5](11.5,1.84)
\psdot[dotsize=2pt .5](11.5,3.68)
%\psline[linewidth=.5pt]{->}(11.5,2.5)(11.5,3.68)
 %\put(110,39){\tiny{$(\alpha-d(x,\partial S))e_2$}}
%\psline[linewidth=.5pt]{->}(11.5,2.5)(11.5,2.1)
  %\put(116,23){\scriptsize{$-e_2$}}
    \put(117,10){\scriptsize{$B(-r_ne_2,r_n)$}}
    %\put(75,5){\scriptsize{$B((\alpha-d(x,\partial S))e_d,\alpha)$}}
\put(47,48){{{{$\longrightarrow$}}}}
\put(49,50){{{\scriptsize{$\mathcal{O}$}}}}

\put(88,48){{{{$\longrightarrow$}}}}
\put(77,50){{{\scriptsize{$\oplus\{(\alpha-d(x,\partial S))e_2\}$}}}}

 \put(24,55){\scriptsize{(a)}}
  \put(69,55){\scriptsize{(b)}}
   \put(114,55){\scriptsize{(c)}}
           \end{picture}
\caption{\textit{(a) $B(\tilde{x},r_n)\cap S$. (b) Result of applying an orthogonal transformation $\mathcal{O}:\mathbb{R}^2\rightarrow\mathbb{R}^2$ such that $\mathcal{O}(\eta)=-e_2$. (c) Translation by the vector $(\alpha-d(x,\partial S))e_2$. In black $\mathcal{A}(h_2)$ and in gray $\mathcal{C}(h_1)$.}}
\label{fig2:apuf2}
\end{center}
\end{figure}
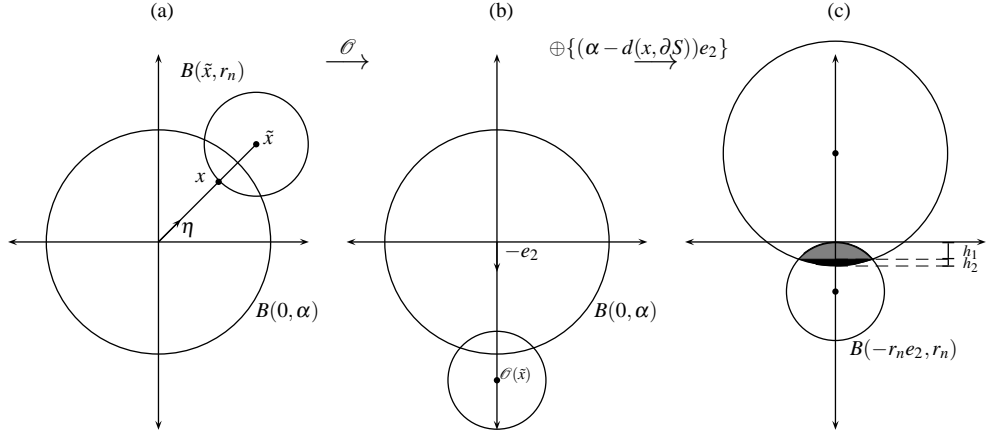

%\begin{figure}[htb]
%\begin{center}
%\setlength{\unitlength}{1mm}
%\begin{picture}(60,60)
%
%\pscircle[linewidth=.5pt](2.5,3.5){1.5}
%\psellipse[linewidth=.5pt,linestyle=dashed](2.5,3.5)(1.5,.15)
%
%\put(27,51){{{\scriptsize{$B((\alpha-d(x,\partial S))e_d,\alpha)$}}}}
%\put(33,12){{{\scriptsize{$B(-r_ne_d,r_n)$}}}}
%\put(45,22.3){{{\scriptsize{$h_1$}}}}
%\psline[linewidth=.5pt]{|-|}(4.4,2.5)(4.4,2.18)
%\psline[linewidth=.5pt]{|-|}(4.4,2.18)(4.4,2)
%\put(45,20){{{\scriptsize{$h_2$}}}}
%
%\pscircle[linewidth=.5pt](2.5,1.7){0.8}
%\psellipse[linewidth=.5pt,linestyle=dashed](2.5,1.7)(0.8,.15)
%\psline[linewidth=.5pt]{->}(2.5,2.5)(5,2.5)
%\psline[linewidth=.5pt]{<->}(2.5,0)(2.5,5.5)
%\psline[linewidth=.5pt]{->}(2.5,2.5)(1.1,1.2)
%
%\pscustom[linewidth=.25pt,fillstyle=vlines]{
%\psarc(2.5,1.7){0.8}{35}{145}
%%\psline{-}(1.85,2.15)(3.15,2.15)
%\psarc[linewidth=.25pt](2.5,3.5){1.5}{259}{284}
%}
%
%
%\pscustom[linewidth=.25pt,fillstyle=solid, fillcolor=cMediumGray]{
%\psarc(2.5,1.7){0.8}{35}{145}
%\psline{-}(1.85,2.15)(3.15,2.15)
%}
%
%\psellipse[linewidth=.5pt,linestyle=dashed](2.5,2.18)(0.65,.05)
%\psellipse[linewidth=.5pt,linestyle=dashed,fillstyle=solid, fillcolor=cMediumGray](2.5,2.18)(0.65,.05)
%\pscustom[linewidth=.25pt,fillstyle=solid, fillcolor=black]{
%\psline{-}(3.15,2.15)(1.85,2.15)
%\psarc[linewidth=.25pt](2.5,3.5){1.5}{259}{284}
%}
%\end{picture}
%\caption{\textit{The dashed area corresponds to {\mbox{$B((\alpha-d(x,\partial S))e_d,\alpha)\cap B(-r_ne_d,r_n)$}} in $\mathbb{R}^3$.}}
%\label{graf.Ch1Rd2}
%\end{center}
%\end{figure}
Since $\mathcal{C}(h_1)$ and $\mathcal{A}(h_2)$ are disjoint, up to a zero measure set, we have
\begin{equation}\label{AyC}
\mu(B(-r_ne_2,r_n)\cap B((\alpha-d(x,\partial S))e_2,\alpha))=\mu(\mathcal{C}(h_1))+\mu(\mathcal{A}(h_2)).
\end{equation}
First, in order to find an upper bound in (\ref{AyC}), we shall see that $\mu(\mathcal{A}(h_2))\leq\mu(\mathcal{C}(h_1))$. It can be easily
proved that $\mu(\mathcal{A}(h_2))=\mu(\mathcal{A}_0(h_2))$, where
\[\mathcal{A}_0(h_2)=\{z\in\mathbb{R}^2:\ 0\leq\left\langle z,e_2\right\rangle \leq h_2\}\cap B(-(\alpha-h_2)e_2,\alpha).\]
As in Lemma \ref{Ch1}, we have $\mu(\mathcal{A}_0(h_2))=\int_0^{h_2}2\sqrt{2\alpha y-y^2}dy$.  Using the change of variable $l=h_2-y$, and taking into account that
$2\alpha y-y^2=\alpha^2-(\alpha-y)^2$  we can write
$$\mu(\mathcal{A}_0(h_2))=\int_0^{h_2}2\sqrt{\alpha^2-(\alpha-h_2+l)^2}dl=2\int_0^{h_2}\sqrt{s(l)}dl,$$
Similarly we have $\mu(\mathcal{C}(h_1))=2\int_0^{h_1}\sqrt{r(l)}dl,$ where $r(l)=r_n^2-(r_n-h_1+l)^2$. Note that $r(0)=s(0)=\lambda^2$ and $h_2\leq h_1$.
It is easy to show that $s(l)\leq r(l)$ and therefore
\[\mu(\mathcal{A}(h_2))=2\int_{0}^{h_2}\sqrt{s(l)}dl\leq 2\int_{0}^{h_2}\sqrt{r(l)}dl\leq\mu(\mathcal{C}(h_1)).\]
Now, if we return to Equation (\ref{AyC}), we get
\begin{equation*}
%\mu(B(-r_ne_d,r_n)\cap B((\alpha-d(x,\partial S))e_d,\alpha))\leq 2\mu(\mathcal{C}(h_1)).
\mu(B(\tilde{x},r_n)\cap S)\leq 2\mu(\mathcal{C}(h_1)).
\end{equation*}
An upper bound for $\mu(\mathcal{C}(h_1))$ can be easily found since %We will thus concentrate on $\mathcal{C}(h_1)$. Lemma \ref{Ch1.Rd} provided a lower bound for the measure of $\mathcal{C}(h_1)$. However, we now need an upper bound for  $\mu(\mathcal{C}(h_1))$. Proceed as in the proof of Lemma \ref{Ch1.Rd} to get
$$
\mu(\mathcal{C}(h_1))%&=&\omega_{d-1}\int_{0}^{h_1}{\left(2r_n(h_1-l)-(h_1-l)^2\right)^{\frac{d-1}{2}}dl}\\
=\int_{0}^{h_1}\sqrt{\left(2r_ny-y^2\right)}dy\leq \int_{0}^{h_1}\sqrt{2 r_n y}dy=O\left(r_n^{\frac{1}{2}}h_1^{\frac{3}{2}}\right)=O\left(r_n^{\frac{1}{2}}d(x,\partial S)^{\frac{3}{2}}\right),
$$
where in the last equality we have used $h_1\leq d(x,\partial S)$. As a consequence,
\[P_X(B(\tilde{x},r_n))\leq Lr_n^{\frac{1}{2}}d(x,\partial S)^{\frac{3}{2}},\]
with $L>0$ a constant which does not depend on $x$.
Finally, if we apply the latter bound to (\ref{eq23}), then we have that
\begin{eqnarray*}
\mathbb{E}(d_\mu(S,S_n))&\geq&\int_{\{x\in S: d(x,\partial S)\leq r_n/2\}}\exp\left(-2nLr_n^{\frac{1}{2}}d(x,\partial S)^{\frac{3}{2}}\right)\mu(dx)\\
&=&\int_{\mathcal{T}^{-1}([0,r_n/2])}{g(\mathcal{T}(x))\mu(dx)},\\
\end{eqnarray*}
where $\mathcal{T}:S\rightarrow\mathbb{R}$ is defined as $\mathcal{T}(x)=d(x,\partial S)$ and $g(z)=\exp(-2nLr_n^{\frac{1}{2}}z^{\frac{3}{2}})$. We use the same
change of variables formula, see (\ref{bill.cvt}), with  $F(z)=\pi(\alpha^2-(\alpha-z)^2)$. So
%\begin{equation*}
%\int_{\mathcal{T}^{-1}([0,r_n/2])}{g(\mathcal{T}(x))\mu(dx)}=\int_{[0,r_n/2]}{g(\rho)\mu\mathcal{T}^{-1}(d\rho)}
%\end{equation*}
%where $\rho=\mathcal{T}(x)$ and $\mu \mathcal{T}^{-1}$ is the measure characterized by
%$F(z)=\mu\{x\in S:\ d(x,\partial S)\leq z\}.$ Again we can use that $F(z)$ is a polynomial
%of degree at most {\color{red} $2$} in $z$ for $z<\alpha$. In fact, in this particular case, for $z<\alpha$,
\begin{eqnarray*}
\mathbb{E}(d_\mu(S,S_n))&\geq&\int_{0}^{r_n/2}\exp\left(-2nLr_n^{\frac{1}{2}}\rho^{\frac{3}{2}}\right)F^\prime{}(\rho)d\rho\\
=\int_{0}^{r_n/2}\exp\left(-2nLr_n^{\frac{1}{2}}\rho^{\frac{3}{2}}\right)2\pi (\alpha-\rho)d\rho&\geq& \pi\alpha\int_{0}^{r_n/2}\exp\left(-2nLr_n^{\frac{1}{2}}\rho^{\frac{3}{2}}\right)d\rho.
\end{eqnarray*}
%It is immediate to show that for $0\leq\rho\leq r_n/2$ the function $F^\prime{}(\rho)=\omega_{d}d(\alpha-\rho)^{d-1}$ is
%decreasing with $F^\prime{}(\rho)\geq F^\prime{}(r_n/2)=\omega_{d}d(\alpha-r_n/2)^{d-1}\geq\omega_{d}d(\alpha/2)^{d-1}$.
A straightforward calculation shows that
%\begin{eqnarray*}
%\mathbb{E}(d_\mu(S,S_n))&\geq&\int_{0}^{r_n/2}\exp\left(-2nLr_n^{\frac{d-1}{2}}\rho^{\frac{d+1}{2}}\right)\omega_{d}d\left(\frac{\alpha}{2}\right)^{d-1} d\rho\\
%%&=&2\pi\alpha \int_{0}^{\frac{Ln}{2^{(d+1)/2}}r_n^{d}}{\frac{2}{d+1}{L^{-2/(d+1)}}r_n^{-(d-1)/(d+1)}n^{-2/(d+1)}e^{-v}v^{-(d-1)/(d+1)}dv}\\
%&=&\omega_{d}d\left(\frac{\alpha}{2}\right)^{d-1}\int_{0}^{\frac{2nL}{2^{(d+1)/2}}r_n^d}{\frac{1}{\frac{d+1}{2}(2L)^{2/(d+1)}}r_n^{-\frac{d-1}{d+1}}n^{-\frac{2}{d+1}}e^{-v}v^{\frac{1-d}{d+1}}dv}\\
%&=&\omega_{d}d\left(\frac{\alpha}{2}\right)^{d-1}\frac{1}{\frac{d+1}{2}(2L)^{2/(d+1)}}r_n^{-\frac{d-1}{d+1}}n^{-\frac{2}{d+1}}\int_{0}^{\frac{2nL}{2^{(d+1)/2}}r_n^d}{e^{-v}v^{\frac{1-d}{d+1}}dv}.\\
%\end{eqnarray*}
$$
\mathbb{E}(d_\mu(S,S_n))\geq Cr_n^{-\frac{1}{3}}n^{-\frac{2}{3}}\int_{0}^{\frac{Lnr_n^2}{\sqrt{2}}}e^{-v}v^{-\frac{1}{3}}dv   %\omega_{d}d\left(\frac{\alpha}{2}\right)^{d-1}\frac{1}{\frac{d+1}{2}(2L)^{2/(d+1)}}r_n^{-\frac{d-1}{d+1}}n^{-\frac{2}{d+1}}\int_{0}^{\frac{2nL}{2^{(d+1)/2}}r_n^d}{e^{-v}v^{\frac{1-d}{d+1}}dv}.
$$
for some constant $C>0.$
%We have used the change of variables formula with $v=2nLr_n^{\frac{d-1}{2}}\rho^{\frac{d+1}{2}}$. Therefore
%\[\liminf_{n\rightarrow\infty}r_n^{\frac{d-1}{d+1}}n^{\frac{2}{d+1}}\mathbb{E}(d_\mu(S,S_n))\geq \liminf_{n\rightarrow\infty}\frac{\omega_{d}d\left({\alpha}/{2}\right)^{d-1}}{\frac{d+1}{2}(2L)^{2/(d+1)}}\int_{0}^{\frac{2nL}{2^{(d+1)/2}}r_n^d}{e^{-v}v^{\frac{1-d}{d+1}}dv}.\]
Since $nr_n^2\rightarrow\infty$, we have% and hence, for all $M>0$ there exists $N\in\mathbb{N}$ such that $\frac{2nL}{2^{(d+1)/2}}r_n^d>M$ for $n\geq N$. For large enough $n$,
%$$\int_{0}^{\frac{2nL}{2^{(d+1)/2}}r_n^d}{e^{-v}v^{\frac{1-d}{d+1}}dv}\geq\int_{0}^{M}{e^{-v}v^{\frac{1-d}{d+1}}dv}>0.$$
$$\liminf_{n\rightarrow\infty}r_n^{\frac{1}{3}}n^{\frac{2}{3}}\mathbb{E}(d_\mu(S,S_n))\geq C\int_{0}^{\infty}{e^{-v}v^{-\frac{1}{3}}dv}>0.$$
This completes the proof of Theorem \ref{nomejor}. \qed 
\end{proof}

%%%puntos extremos
\begin{proof}[Theorem \ref{extremos}]
Note that under the stated assumptions on $P_X$ we have that, for any measurable nonnegative function in $S$, $\varphi$, we have that
$$
\int_S \varphi(x) P_X(dx)\leq \beta \int_S \varphi(x)\mu(dx).
$$
Using this fact in (\ref{extremes}), we can follow the same lines as in the final
part of the proof of Theorem \ref{ordenR2} to easily conclude the result. \qed 

\end{proof}

\section*{Acknowledgements} The second author thanks Prof. Luc Devroye for his valuable help.
This work has been partially supported by the grant
MTM2008-03010 (A. Rodr\'{i}guez-Casal) from the Spanish Ministerio de
Educaci\'{o}n y Ciencia.

\bibliographystyle{plain}

%\nocite{*}
%\bibliography{biblioR2}
% This file was created with JabRef 2.5.
% Encoding: Cp1252

\end{document}